\theoremstyle{plain}
\newtheorem{lemma}{Lemma}
\theoremstyle{definition}
\newtheorem{definition}{Definition}
\theoremstyle{remark}
\newcommand{\E}{\mathbb{E}}
\newcommand{\pay}{p_{j,n+1}}
\newcommand{\exppay}{\bar{p}_{j}}
\newcommand{\alloc}{x_{j,n+1}}
\newcommand{\expalloc}{\bar{x}_{j}}
\newcommand{\Acal}{\mathcal{A}}
\newcommand{\bbf}{\mathbf{b}}
\newcommand{\Fcal}{\mathcal{F}}
\newcommand{\hatsigma}{\hat{\sigma}}
\newcommand{\realpay}{{\sum_{j\in[J]}\hat{p}_{j,t}}}
\title{Multi-Platform Budget Management in Ad Markets with Non-IC Auctions}
\author{
  \normalsize Fransisca Susan\\
  \normalsize MIT Operations Research Center\\
  \normalsize \texttt{fsusan@mit.edu}
  \and
  \normalsize Negin Golrezaei\\
  \normalsize MIT Sloan School of Management\\
  \normalsize \texttt{golrezae@mit.edu}
  \and
  \normalsize Okke Schrijvers\\
  \normalsize Meta Platforms, Inc.\\
  \normalsize \texttt{okke@meta.com}
}
\date{}
\begin{document}
\maketitle
\small{In online advertising markets, budget-constrained advertisers acquire ad placements through repeated bidding in auctions on various platforms. We present a strategy for bidding optimally in a set of auctions that may or may not be incentive-compatible under the presence of budget constraints. Our strategy maximizes the expected total utility across auctions while satisfying the advertiser's budget constraints in expectation. Additionally, we investigate the online setting where the advertiser must submit bids across platforms while learning about other bidders' bids over time. Our algorithm has $O(T^{3/4})$ regret under the full-information setting. Finally, we demonstrate that our algorithms have superior cumulative regret on both synthetic and real-world datasets of ad placement auctions, compared to existing adaptive pacing algorithms.}

\normalsize

\section{Introduction}
\label{sec:intro}
The online advertising industry has seen tremendous growth in recent years, with an estimated worth of $190$ billion dollars in the United States alone in 2021 \cite{statista2021}. Centralized platforms like Google and Meta play a significant role in this thriving industry, as advertisers, including small businesses and marketing practitioners, compete daily in thousands of auctions for valuable ad impressions. With multiple channels now available, advertisers are faced with the complex task of managing their budgets dynamically across different platforms, each with its own auction format that might or might not be incentive compatible.

Advertisers aim to maximize their utility (such as exposure, impressions, CTRs, etc.) while staying within their budget limits. One of the challenges that they have is that each platform utilizes its own unique auction format that requires a tailored strategy. The problem becomes harder because advertisers have to balance their current and future bidding opportunities across platforms to ensure their budget lasts throughout the duration of their campaign, as running out of funds before the end of the campaign could result in significant losses. Moreover, they also might start with little/no prior information about how their competitors bid in each platform, adding another ambiguity to their problem. 

In the area of budget management in ad markets, previous work has looked at the problem of dividing budgets across standard auctions with contextual information on users and items \cite{balseiro2022contextual}. While this study provides a valuable value-pacing strategy and proof of Bayes-Nash equilibrium, it is limited to standard auctions that exclude common position auctions such as generalized second price (GSP), or generalized
first price (GFP) auctions, as well as being applicable only in offline settings. On the other hand, \cite{balseiro2019learning} focuses on the dynamic setting and provides an adaptive pacing strategy to divide the advertiser's budget over time, but only considers second price auctions that are incentive compatible. This study narrows its scope to a single type of auction, instead of the broader scope of multi-platform auctions. In contrast, our work expands upon the prior research by defining a broader space of auctions that includes GSP and GFP and examining the optimal value-pacing strategy for this larger class of auctions. We consider multiple auctions on different platforms, both in the offline and dynamic settings, providing a comprehensive solution for budget management in ad markets. We answer the following key research question:

\begin{displayquote}
What is the optimal budget allocation strategy for an advertiser in a multi-platform advertising market with varying and potentially non-incentive compatible auction formats? How can this strategy be translated and optimized to maximize the utility in a dynamic setting with limited prior information about competitors' bidding behavior?
\end{displayquote}

To answer these questions, we consider multiple platforms that each runs its own auction format, which may or may not be incentive compatible. We look at both the offline and online settings. In the offline setting, the advertiser knows her budget, the format of each auction, her realized values in each auction, and the distribution of other bids in each auction, but is unaware of the actual bids made by other bidders. The goal is to optimize bidding strategies across different platforms, leading to an optimal expenditure in each of them. In the online setting, the advertiser knows the format of each auction and her total budget at the outset, and dynamically learns over time the best way to scale her values and stay within budget, given the (unknown) distribution of other bids in each platform. We now summarize our main contributions.

\textbf{Main Contributions.} 
\textit{Offline setting: Optimal budget allocation strategy across multiple platforms and auctions.} For the offline setting, under mild assumptions, we present a value-pacing strategy for allocating an advertiser's budget across multiple platforms and auctions, which may not be incentive-compatible, with the goal of maximizing the advertiser's expected (quasi-linear) utility. The strategy involves determining the optimal pacing value for each auction based on the budget, and finding the best response strategy for each auction format. This study extends previous work in \cite{balseiro2022contextual}, which introduced a value-pacing equilibrium concepts in a complete information and contextual environment, but only focused on standard auctions\footnote{auctions in which the highest bidder wins, which includes first price and second price auctions}. We expand upon this by considering a more general auction format that may not be incentive-compatible and show that our results still hold under certain assumptions.

\textit{Dynamic setting: Adaptive budget optimization strategy over time and across platforms.} We present a gradient descent-based strategy that adapts by continually learning the pacing value and the best response strategy for each auction format over time. We build upon prior work in \cite{balseiro2019learning}, which only examined the second price auction. Our algorithm obtains vanishing regret theoretically.

\textit{Numerical studies on real and synthetic auction dataset.} We present an empirical evaluation of our online algorithm on both synthetic and real dataset, and demonstrate that our algorithm outperforms the benchmark in all settings. Despite the fact that many of the theoretical assumptions stated in our theorems are not satisfied in both real and synthetic data, our results still reveal vanishing regrets.

\section{Other Related Works}
\label{sec:relatedwork}

\textbf{Auctions with financially constrained buyers.} The study of auctions with financially constrained buyers has a long history, e.g. \cite{laffont1996optimal, mehta2007adwords, abrams2008comprehensive, azar2009gsp, goel2010gsp, karande2013pacing, rusmevichientong2006, feldman2007budget, hosanagar2008, conitzer2021pacing, conitzer2022pacing}.
In the early work by \citet{laffont1996optimal}, the authors considered a symmetric setting with identical budgets for all buyers and showed that an all-pay auction with reserves is the optimal mechanism. Since then, a large number of works have extended and improved upon this result in various directions. For example, Golrezaei et al. \cite{golrezaei2021bidding} designed the optimal auction for ROI-constrained buyers and showed through empirical evidence that some buyers in online advertising are indeed budget constrained. Pai et al. \cite{pai2014optimal} considered a multi-dimensional setting where both valuations and budgets were private information. In another work, Golrezaei et al. \cite{golrezaei2021bidding} looked at budget and ROI-constrained buyers in repeated second price auctions and presented a simple optimal policy for buyers to bid optimally. They also extended this policy to an adaptive algorithm for dynamic settings, and from the seller's perspective, they presented a pricing algorithm that maximizes revenue.

\textbf{Learning in sequential auctions with budget constraints.} Several works have explored the topic of learning in auctions with budget constraints in various settings of repeated auctions. \cite{han2020optimal} presents an optimal no-regret learning algorithm in repeated first price auctions using the monotone group contextual bandit approach. For repeated second price auctions, \cite{balseiro2019learning} proposes an optimal gradient-based algorithm that paces the values at each round according to budget constraint. Another work, \cite{kumar2022optimal} considers both an episodic setting that is stationary for each episode, but can be vastly different across episodes, and a non-stationary setting where both prices and values change over time. They extend the algorithm in \cite{balseiro2019learning} to vary the approximate spend rates over time, and present an algorithm that approximates the optimal bidding strategy up to a factor in the non-stationary setting. \footnote{See also \cite{MorgensternR16, roughgarden2016ironing, mohri2016learning, CaiD17, DudikHLSSV17,kanoria2014dynamic, golrezaei2019IC, golrezaei2018dynamic, golrezaei2021bidding} for works that use online learning algorithms for optimizing reserve prices in auctions.}

\textbf{Multi-channel auction problem.} Our work also relates to other papers that look at optimization across auctions on multiple channels, e.g. \cite{avadhanula2021} \cite{deng2023multi}, \cite{liaw2022efficiency}. \cite{avadhanula2021} models an online problem to divide budget across multiple channels/platforms over time as a stochastic bandit with knapsacks problem, and presents a no-regret algorithm for both the discrete and continuous bid-spaces. Another prominent line of work in this area is the study of autobidding, a mechanism where each channel/platform can bid on behalf of the advertiser through an automated black box algorithm - that was first introduced in \cite{aggarwal2019autobidding}, and has since been examined through other lenses such as welfare efficiency \cite{deng2022efficiency} and fairness \cite{deng2022fairness}. 

\section{Model}
\label{sec:model}
\textbf{Notation.} We use the following notations throughout the paper.
For a set of bids $\bbf = (b_1,\ldots,b_{n+1})$, $\bbf_{-i}$ denotes the set of all bids except for the $i^{\text{th}}$ bid, $b_i$.

\subsection{Offline Setting}
\textbf{Multi-platform setting.} Suppose that there are $J$ platforms, where platform $j\in[J]$ runs a single-unit or multi-unit auction $\Acal_j$. Here, $\Acal_j$ can be, for example, Vickrey-Clarke-Groves (VCG), GSP, or GFP auctions. Suppose that there are $(n+1)$ total bidders who are permitted to participate in all $J$ platforms. The values of all bidders for all auctions are restricted within a bounded space $\mathcal{V}$, where we assume that there exists a constant $U$ such that $\mathcal{V}\subseteq(0, U)$. 

\textbf{Auction rules.} We consider a setting where each bidder submits a scalar bid to each auction (platform)\footnote{If a bidder chooses not to participate in an auction, they can submit a bid of $0$.}, and for each auction, given all the bids $\mathbf{b}_j = (b_{j,1},b_{j,2},\ldots,b_{{j,1n+1}})$, we have the allocation and payment rules that map the bid vector $\mathbf{b}_j$ to the corresponding click-through rate (CTR) and payment for each bidder. The CTR is dependent on the position achieved by each bidder, with each position corresponding to a CTR value in $[0,1]$ and no position resulting in a CTR of 0. Specifically, for auction $j$, we have the allocation function $x_j:\mathbb{R}^{n+1}\rightarrow[0,1]^{n+1}$ that maps the bid vector $\mathbf{b}_j$ to the CTR vector, and the payment function $p_j:\mathbb{R}^{n+1}\rightarrow\mathbb{R}^{n+1}$ that maps the bid vector $\mathbf{b}_j$ to the payment vector.

We focus on the perspective of a specific bidder, the $(n+1)^{\text{th}}$ bidder, whose value in auction $j$, $v_j$, is drawn from a bounded distribution $V_j$. In auction $j$, it is assumed that the bids of all other bidders are identically distributed and drawn from a bounded distribution $F_j$, whose bids are in $[0,U]$. By taking expectation with respect to the distribution of other bids, for each auction $j$, we can derive the expected  allocation and payment of the $(n+1)^{\text{th}}$ bidder, as functions of her own bid. Specifically, $$\expalloc(b) := \E_{\{b_{j,i}\}_{i\in[n]}\sim F_j}[\alloc(b_{j,1},\ldots, b_{j,n}, b)]$$ represents the expected allocation and \[\exppay(b) := \E_{\{b_{j,i}\}_{i\in[n]}\sim F_j}[\pay(b_{j,1},\ldots, b_{j,n}, b)]\] represents the expected payment of the last bidder when she submits a bid $b$, where $x_{j,n+1}$ and $p_{j,n+1}$ are the $(n+1)^{\text{th}}$ element of the vectors resulted from the functions $x_j$ and $p_j$, respectively. For simplicity, we shorten these expressions to $\expalloc(b)$ and $\exppay(b)$. Throughout the paper, we assume that both functions are continuously differentiable almost everywhere with respect to the last bidder's bid $b$. These assumptions are implied when the distributions of other bids and the distribution of values are continuous, which is a common occurrence in the literature (\cite{balseiro2017budget, balseiro2022contextual, brustle2022price}).

\textbf{Offline bidding problem.} We aim to determine an optimal bidding strategy for the $(n+1)^{\text{th}}$ bidder who wants to divide her budget across auctions to maximize her total expected utility. The bidder is aware of her budget $\rho$, the format of each auction $(\Acal_1,\ldots,\Acal_J)$, her value distribution in each auction $\mathbf{V} = (V_1,\ldots,V_J)$, and the distribution of other bids in each auction $\mathcal{F} =(F_1,\ldots,F_J)$, but she does not know the actual bids made by other bidders. Moreover, her expected utility in auction $j$ when her value and bid are respectively  $v_j$  and $b$
 is the difference between the expected gain from being allocated and the expected price paid, i.e. $v_j\expalloc(b) - \exppay(b)$.

We want to determine the bidding strategy for bidder n+1, $\{b_{j}(\cdot)\}_{j=1}^J$ as a function of her values in all platforms and her budget. This is challenging because the strategies across channels cannot be optimized individually due to the existence of the budget constraint. In fact, they need to be optimized jointly, creating a big decision space. For simplicity, we assume that the dependence on the values on other platforms and the budget is implicit, and we use $b_j(v_j)$ as the bid of bidder $n+1$ (possibly random) for platform $j$ (auction $\Acal_j$) when the last bidder's value in auction $j$ is $v_j$. Hence, we want to find a set of bidding functions  $\{b_j(\cdot)\}_{j=1}^J$ for the last bidder that maximizes her total expected utility across auctions, $\sum_{j=1}^Jv_j\expalloc(b_j(v_j)) - \exppay(b_j(v_j))$, such that her expected expenditure does not exceed $\rho$. Precisely,  we aim to solve the following optimization:
\begin{equation}
\label{eq:offlineoptproblem}
\begin{aligned}
Z(\Fcal, \mathbf{V}, \rho)
&:= \underset{\left\{b_j(\cdot):\mathcal{V}\rightarrow\mathbb{R}_{\geq 0}\right\}_{j=1}^J}{\max}\quad \sum_{j=1}^J\E_{\Fcal, \mathbf{V}}\left[v_j\expalloc(b_j(v_j))-\exppay(b_j(v_j))\right]\\
&s.t.\quad \sum_{j=1}^J\E_{\Fcal, \mathbf{V}}\left[\exppay(b_j(v_j))\right]\leq \rho.
\end{aligned}
\end{equation}
Here, for simplicity, we shorten the notation $b_{j,n+1}$ (bidder n+1's bid) to $b_j$ instead, which is a function that maps bidder n+1's value to a bid. Hence, $\rho$ is the per-round budget, $(b_1(\cdot),\ldots,b_J(\cdot))$ is the set of bidding strategies, $\Fcal = (F_1,\ldots,F_J)$ is the family of distribution of other bids, such that the $j^{\text{th}}$ element corresponds to auction $\Acal_j$, and $\mathbf{V} =(V_1, \ldots, V_J)$ is the set of valuation distributions (of the last bidder) across all the platforms.

\subsection{Online/Repeated Setting}
In the online problem, we assume that $J$ auctions run simultaneously at each round. Then,  the last bidder, knowing only her realized values in each auction, has to submit her bids in the current auction without knowing her values in all future auctions. The bidder also knows her average budget per round $\rho$ and the total number of rounds $T$ in the beginning\footnote{The assumption of knowing $T$ can be relaxed using the doubling trick.}. She has to bid in each auction on each round, while learning the distribution of her competitors. We consider a full-information setting where all the other bids are revealed in each round for each auction, which is commonly studied in the literature (\cite{tran2014efficient, chen2022dynamic, conitzer2022multiplicative}). Letting $\bm{X} = (\{v_{1,t},\ldots,v_{J,t}\}_{t\in[T]}, (F_1,\ldots,F_J))$, we define the in-hindsight benchmark as
\begin{equation}
\begin{aligned}
\label{eq:generalonlinebenchmark}
OPT(\bm{X}, \rho)
&:=\underset{\substack{\left\{b_{j,t}(\cdot):\mathcal{V}\rightarrow\mathbb{R}_{\geq 0}\right\}\\t\in[T],j\in[J]}}{\max}\quad \sum_{t=1}^T\sum_{j=1}^J v_{j,t}\expalloc(b_{j,t}(v_{j,t})) - \exppay(b_{j,t}(v_{j,t}))\\
&s.t.\quad \sum_{t=1}^T\sum_{j=1}^J\exppay(b_{j,t}(v_{j,t}))\leq \rho T,
\end{aligned}
\end{equation}
where $v_{j,t}$ and $b_{j,t}$ are the bidder's value and bid at round $t$, respectively. The following proposition relates $OPT(\bm{X}, \rho) $ to the offline benchmark $Z(\Fcal, \mathbf{V}, \rho)$, defined in the previous section. The following proposition, which is proved in Section~\ref{sec:prop1proof}, shows that $T\cdot Z(\Fcal, \mathbf{V}, \rho)$ is a stronger benchmark than $\E[OPT(\bm{X}, \rho)]$. 

\begin{restatable}{prop}{offlinetoonline}
\label{prop:offlinetoonline}
The optimal solution to the  problem in Equation \eqref{eq:offlineoptproblem} provides an upper-bound for $\frac{1}{T}\cdot\E[OPT(\bm{X}, \rho)]$; that is, we have  
$\E[OPT(\bm{X}, \rho)]\le T\cdot Z(\Fcal, \mathbf{V}, \rho)$. 
\end{restatable}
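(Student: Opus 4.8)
The plan is to prove the inequality by exhibiting a \emph{single} feasible solution to the offline program \eqref{eq:offlineoptproblem} whose objective value equals exactly $\frac{1}{T}\E[OPT(\bm{X},\rho)]$. Since $Z(\Fcal,\mathbf{V},\rho)$ is the maximum over all feasible offline strategies, the bound $\E[OPT(\bm{X},\rho)]\le T\cdot Z(\Fcal,\mathbf{V},\rho)$ then follows immediately. The enabling observation is that the offline objective and the offline constraint depend on a (possibly randomized) bidding function $b_j(\cdot)$ only through the joint law of the pair $(v_j,b_j(v_j))$, so it suffices to produce, for each platform $j$, a randomized map from a value to a bid that induces the right joint law.

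To build such a map I would use the in-hindsight optimum as an oracle. For each realization $\bm{X}$, let $\{b^*_{j,t}\}$ denote an optimal solution to \eqref{eq:generalonlinebenchmark}, and define a randomized offline bidding function $\tilde b_j(\cdot)$ as follows. On input value $v$: (i) draw a round index $\tau$ uniformly from $[T]$; (ii) generate a phantom realization $\bm{X}$ by setting the platform-$j$, round-$\tau$ value equal to $v$ and drawing every other value independently from its distribution in $\mathbf{V}$; (iii) solve \eqref{eq:generalonlinebenchmark} on this $\bm{X}$ and output $b^*_{j,\tau}$. Because fixing $v_{j,\tau}=v$ with $v\sim V_j$ and drawing the remaining entries i.i.d.\ reproduces exactly the i.i.d.\ law of a fresh full realization, evaluating the offline objective of $\tilde b$ under $v_j\sim V_j$ collapses, after averaging over the uniform $\tau$, to
\[
\sum_{j}\E\big[v_j\expalloc(\tilde b_j(v_j))-\exppay(\tilde b_j(v_j))\big]
=\frac1T\sum_{\tau=1}^T\E_{\bm{X}}\Big[\sum_{j}\big(v_{j,\tau}\expalloc(b^*_{j,\tau})-\exppay(b^*_{j,\tau})\big)\Big]
=\frac1T\,\E[OPT(\bm{X},\rho)].
\]
The identical averaging applied to the expenditure gives $\sum_j\E[\exppay(\tilde b_j(v_j))]=\frac1T\E_{\bm{X}}[\sum_{t,j}\exppay(b^*_{j,t})]$; since every in-hindsight optimum satisfies its per-realization budget $\sum_{t,j}\exppay(b^*_{j,t})\le \rho T$ surely, this quantity is at most $\rho$, so $\tilde b$ is offline-feasible and attains $\frac1T\E[OPT]$, which is what we need.

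The one delicate point — and the main obstacle to making the above rigorous — is that step (iii) requires choosing an optimal in-hindsight solution $b^*(\bm{X})$ \emph{measurably} in $\bm{X}$, so that $\tilde b_j$ is a well-defined randomized strategy and the interchange of expectation and maximization is legitimate. I would handle this either by invoking a measurable-selection theorem for the argmax of the (continuous) in-hindsight objective over the compact feasible bid region, or, to sidestep measurability entirely, by working with $\varepsilon$-optimal selections: for each $\bm{X}$ pick any $b^{\varepsilon}(\bm{X})$ within $\varepsilon$ of the supremum, run the identical averaging argument to obtain $Z(\Fcal,\mathbf{V},\rho)\ge \frac1T\E[OPT(\bm{X},\rho)]-\varepsilon$, and let $\varepsilon\downarrow 0$. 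Note that the exchangeability of the rounds (the values are i.i.d.\ across $t$) is exactly what makes the uniform-$\tau$ averaging telescope cleanly into $\frac1T\E[OPT]$; no symmetry property of the chosen optimizer itself is required.
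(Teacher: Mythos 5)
Your proof is correct, but it takes a genuinely different route from the paper's. The paper argues through duality: for each realization it upper-bounds $OPT(\bm{X},\rho)$ by its Lagrangian dual (weak duality), exchanges the expectation with the minimization over $\mu$, uses stationarity of values across rounds to collapse the $T$-round dual into $T$ times the single-round dual $q(\mu,\mathbf{V},\Fcal,\rho)$, and finally invokes the strong duality established in Theorem~\ref{thm:offline} to identify $\min_{\mu\ge 0}q(\mu,\mathbf{V},\Fcal,\rho)$ with $Z(\Fcal,\mathbf{V},\rho)$; consequently the paper's proof implicitly inherits the regularity assumptions of Theorem~\ref{thm:offline} (differentiable expected allocation/payment, unique best responses, etc.). Your argument is purely primal: you exhibit a randomized offline strategy --- mix uniformly over rounds and play the hindsight optimizer on a resampled realization --- that is budget-feasible in expectation (because each hindsight solution satisfies its budget constraint surely) and attains exactly $\frac{1}{T}\E[OPT(\bm{X},\rho)]$, so the bound follows from $Z$ being a maximum. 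What your route buys is generality: no differentiability, uniqueness, or strong-duality machinery is needed, and you correctly isolate the one real technical burden, namely a measurable or $\varepsilon$-optimal selection of the hindsight optimizer. What it costs: (i) it needs the offline program to admit randomized bidding functions, which the paper does allow (``possibly random''); this matters because your strategy is genuinely randomized, and a constraint-in-expectation program restricted to deterministic maps could in principle have a strictly smaller value; and (ii) the phantom-resampling step as written requires the values to be mutually independent across rounds and platforms --- if values were correlated across platforms within a round you would need to resample the remaining entries from the conditional law given $v_{j,\tau}=v$, a cosmetic fix since the offline objective and constraint depend only on the per-platform joint law of $(v_j,b_j(v_j))$. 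The paper's route, by contrast, is shorter given Theorem~\ref{thm:offline} and sets up the dual objects ($q$, and later $\Gamma$) that are reused throughout the online analysis.
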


A strategy $\beta$ with bidding strategy $b_{j,t}^{\beta}(\cdot)$ aims to maximize the expected utility and satisfies the budget constraint: 
\begin{equation}
\label{eq:onlinebudgetconstraint}
\begin{aligned}
    \sum_{t=1}^T\sum_{j=1}^J\E_{V_j, F_j}\left[\pay\left(b_{j,t}^\beta(v_{j,t})\right)\right]\leq\rho T.
\end{aligned}
\end{equation}
The bidding strategy $b_{j,t}^{\beta}(\cdot)$ maps the last bidder's budget, her realized values in all auctions up to round $t$, and all of the other bidders' bids up to round $t-1$ to a set of bids (one for each auction) at round $t$.

We measure the performance of any strategy $\beta$ that satisfies the budget constraint in expectation against the in-hindsight benchmark with the following cumulative regret:
\begin{equation}
\label{eq:avgregret}
\begin{aligned}
    \text{Reg}^\beta(T,\rho) &= \E\Bigg[OPT(\bm{X}, \rho)- \left(\sum_{t=1}^T\sum_{j=1}^J v_{j,t}\expalloc\left(b_{j,t}^{\beta}(v_{j,t})\right) - \exppay\left(b_{j,t}^{\beta}(v_{j,t})\right)\right)\Bigg],
\end{aligned}
\end{equation}
where the expectation is taken with respect to the distribution of the last bidder's values.

\section{Optimal Solution for the Offline Setting}
\label{sec:offline}
In this section, we show that the \emph{Value-Pacing strategy} is optimal in maximizing the expected utility across auctions while adhering to the budget constraint in expectation.  We make some assumptions to ensure the optimality of the \emph{Value-Pacing strategy}. First, we assume that the expected allocation, $\expalloc(b)$, and the expected payment, $\exppay(b)$, are continuously differentiable with respect to $b$ for any $b\in\mathbb{R}$ where the expectations are taken with respect to the distribution of other bids. Second, we assume that for any values $v\in\mathcal{V}$ and any auction $j$, the best response function $\sigma_j:\mathcal{V}\rightarrow\mathbb{R}$, which maps a value $v$ to a corresponding bid $b^*$ that maximizes the expected utility $v\expalloc(b) - \exppay(b)$ (defined in details in Definition~\ref{def:bestresponse}), is unique and continuously differentiable almost everywhere in $(0,\infty)$ with respect to $v$. The first set of assumptions is both natural, for example, it is satisfied when the distributions of other bids are all continuously differentiable for both GFP and GSP auctions, as demonstrated in Section \ref{sec:sufficient}. The second assumption is satisfied in an incentive-compatible auction, when the distribution of other bids is continuously differentiable and the position-wise discount factor decreases exponentially in GFP, and is implied by many previous settings in the literature (\cite{balseiro2022contextual}, \cite{balseiro2019learning}), see Section~\ref{sec:sufficient} for some sufficient conditions under which the both assumption holds.

A Value-Pacing strategy starts by pacing the value in each auction through scaling down values with a constant factor, which is determined by the budget constraint. The best response function is then applied to the paced values to compute the  bid. Before describing the strategy in details, we define the best response function as follows. 
\begin{definition}
\label{def:bestresponse}
$\sigma_j:\mathcal{V}\rightarrow\mathbb{R}$ is the best response function in auction $j$ if
\begin{equation*}
    \sigma_j(v)\in\arg\max_{b\in\mathbb{R}}~v\expalloc(b) - \exppay(b).
\end{equation*}
\end{definition}
Note that we always have $v\expalloc(b) - \exppay(b)\leq v\cdot1 - 0\leq U$. Furthermore, the function $(v\expalloc(b) - \exppay(b))$ is differentiable with respect to $b$ since both the expected allocation and expenditure functions are. So, when the bid value  is optimal, based on the first principle, we have 
\begin{equation}
\label{eq:sigmaderiv}
\begin{aligned}
v\frac{d\expalloc}{db}(\sigma_j(v)) - \frac{d\exppay}{db}(\sigma_j(v)) &= 0.
\end{aligned}
\end{equation}
Here, the first order condition applies to the unconstrained optimization over $b\in\mathbb{R}$. Note that we can assume that $\sigma_j(v)> 0$ when $v>0$ because the best response function $\sigma_j$ is unique for $v\in(0,\infty)$ and for all auctions that we consider, it is reasonable to assume that if we bid a non-positive number, the total expected utility is $0$. We also define the Lagrangian dual of the optimization problem in Equation~\eqref{eq:offlineoptproblem} as follows:
\begin{equation}
\label{eq:dualofflineproblem}
\begin{aligned}
&\min_{\mu\geq 0}~q(\mu,\mathbf{V},{\Fcal}, \rho) \\
= &\min_{\mu\geq 0}~\E\left[\underset{\left\{b_j:\mathcal{V}\rightarrow\mathbb{R}_{\geq 0}\right\}_{j=1}^J}{\max}~\sum_{j=1}^J(v_j\expalloc(b_j)-(1+\mu)\exppay(b_j))\right] +\mu\rho\\
= &\min_{\mu\geq 0}~\E\left[\underset{\left\{b_j:\mathcal{V}\rightarrow\mathbb{R}_{\geq 0}\right\}_{j=1}^J}{\max}~(1+\mu)\sum_{j=1}^J\left(\frac{v_j}{1+\mu}\expalloc(b_j)-\exppay(b_j)\right)\right] + \mu\rho.
\end{aligned}
\end{equation}

We want to show that the Value-Pacing strategy, where we bid $\sigma_j\left(\frac{v_j}{1+\mu^*}\right)$ in auction $j$ for some pacing factor $\mu^*$, is optimal for Problem \eqref{eq:offlineoptproblem}. Specifically, when $\mu^*$ minimizes the dual of the optimization problem, i.e. $\mu^*\in\arg\min_{\mu\ge 0}q(\mu,\mathbf{V},{\Fcal}, \rho)$, the strategy $\{b_j(v_j) = \sigma_j({v_j}/{(1+\mu^*)})\}_{j\in[J]}$ is optimal.

The dual function can be simplified by recognizing that for a fixed value of $\mu$, the optimal set of bids that obtains $q(\mu,\mathbf{V},{\Fcal}, \rho)$ is the set of best response bids, where for auction $j$, the best response function is applied to a scaled value of $v_j/(1+\mu)$. In other words, bidding optimally as if the "real" value for auction $j$ is $v_j/(1+\mu)$ maximizes the dual in Equation~\eqref{eq:dualofflineproblem} for each dual multiplier $\mu$. This is summarized and proven in the following lemma.

\begin{lemma}
\label{lemma:pacingmany}
For a bidder with value $(v_1,\ldots,v_J)\in\mathcal{V}^J$ and budget $\rho$, and for some dual multiplier $\mu\geq 0$,
\begin{equation*}
    b_j(v_j) = \sigma_j\left(\frac{v_j}{1+\mu}\right),~j\in\{1,\ldots,J\}
\end{equation*}
maximizes
\begin{equation*}
    (1+\mu)\sum_{j=1}^J\left(\frac{v_j}{1+\mu}\expalloc(b_j)-\exppay(b_j)\right) + \mu\rho.
\end{equation*}
\end{lemma}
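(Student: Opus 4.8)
The plan is to reduce the stated objective to a collection of independent single-auction best-response problems, at which point the claim follows directly from Definition~\ref{def:bestresponse}. The key structural observations are that the budget term contributes only an additive constant, that the pacing factor is a positive multiplicative constant, and that the remaining objective is separable across auctions.

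First, I would observe that the additive term $\mu\rho$ does not depend on the bids $(b_1,\ldots,b_J)$, and that the multiplicative factor $(1+\mu)$ is a strictly positive constant because $\mu\geq 0$. Hence the set of maximizers over $(b_1,\ldots,b_J)$ of the full expression
\[
(1+\mu)\sum_{j=1}^J\left(\frac{v_j}{1+\mu}\expalloc(b_j)-\exppay(b_j)\right) + \mu\rho
\]
coincides with the set of maximizers of the reduced objective $\sum_{j=1}^J\left(\frac{v_j}{1+\mu}\expalloc(b_j)-\exppay(b_j)\right)$.

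Second, I would exploit separability: each summand $\frac{v_j}{1+\mu}\expalloc(b_j)-\exppay(b_j)$ depends only on its own bid $b_j$, so maximizing the sum over the vector $(b_1,\ldots,b_J)$ is equivalent to maximizing each summand independently over $b_j$. For a fixed auction $j$, the summand is precisely the best-response objective $w\,\expalloc(b)-\exppay(b)$ evaluated at the scaled value $w = v_j/(1+\mu)$. By Definition~\ref{def:bestresponse}, a maximizer of this objective is $\sigma_j\!\left(v_j/(1+\mu)\right)$, which is exactly the claimed bid.

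The argument is essentially mechanical, so there is no substantial obstacle; the only points requiring care are that $1+\mu>0$, which guarantees both that scaling the objective by it preserves the $\arg\max$ and that the scaled value $v_j/(1+\mu)$ is well defined, and that the $\arg\max$ of a separable sum factorizes into the coordinatewise $\arg\max$es. If one preferred a first-order argument, one could instead differentiate each summand with respect to $b_j$ and recover exactly Equation~\eqref{eq:sigmaderiv} with $v$ replaced by $v_j/(1+\mu)$; however, the separability argument is cleaner and does not rely on differentiability or on uniqueness of the maximizer.
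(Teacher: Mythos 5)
Your proposal is correct and takes essentially the same route as the paper's own proof: both discard the additive constant $\mu\rho$ and the positive factor $(1+\mu)$, exploit separability across the $J$ auctions, and invoke Definition~\ref{def:bestresponse} at the scaled value $v_j/(1+\mu)$ (the paper phrases this as the pointwise inequality $\frac{v_j}{1+\mu}\expalloc(\sigma_j(\frac{v_j}{1+\mu})) - \exppay(\sigma_j(\frac{v_j}{1+\mu})) \geq \frac{v_j}{1+\mu}\expalloc(b) - \exppay(b)$ for all $b$). Your write-up is just a more explicit version of the same argument, so nothing further is needed.
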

\begin{proof}
For each $j\in\{1,\ldots,J\}$, by the definition of $\sigma_j$, we have 
\begin{equation*}
\begin{aligned}
\frac{v_j}{1+\mu}\expalloc\left(\sigma_j\left(\frac{v_j}{1+\mu}\right)\right) - \exppay\left(\sigma_j\left(\frac{v_j}{1+\mu}\right)\right)\geq \left(\frac{v_j}{1+\mu}\expalloc\left(b\right)\right)-\exppay\left(b\right),
\end{aligned}
\end{equation*}
for any $b\in\mathbb{R}$, which proves the lemma, since all the other terms only depend on $\mu$ and $\rho$ (both fixed here).
\end{proof}

We can now redefine $q(\mu,\mathbf{V},{\Fcal}, \rho)$ by substituting the optimal set of bids.
\begin{equation}
\label{eq:substituteddualmany}
\begin{aligned}
&q(\mu,\mathbf{V},{\Fcal}, \rho) \\
&= {\mu}\rho + \E\left[(1+\mu)\sum_{j=1}^J \frac{v_j}{1+\mu}\expalloc\left(\sigma_j\left(\frac{v_j}{1+\mu}\right)\right)-\exppay\left(\sigma_j\left(\frac{v_j}{1+\mu}\right)\right)\right]\\
&= {\mu}\rho + \E\left[\sum_{j=1}^J {v_j}\expalloc\left(\sigma_j\left(\frac{v_j}{1+\mu}\right)\right)-(1+\mu)\exppay\left(\sigma_j\left(\frac{v_j}{1+\mu}\right)\right)\right].
\end{aligned}
\end{equation}

We next present the main result in the offline setting, which is proved in Section~\ref{sec:offlineproof}.

\begin{restatable}{theorem}{offlinethm}
\label{thm:offline}
Suppose that for each auction, 1) the expected allocation and expenditure functions are continuously differentiable with respect to the last bidder's bid\footnote{Again, the expectations are taken with respect to the other bids.}, 2) the best response function is unique and continuously differentiable almost everywhere in $(0,\infty)$ with respect to the last bidder's value, and 3) the expected utility of bidding any values less than or equal to the realized values in any auction is always non-negative. If $\mu^*\in\arg\min_{{\mu}\geq 0} q(\mu,\mathbf{V},{\Fcal}, \rho)$, then $b_j = \sigma_j(v_j/(1+\mu^*))$ for all $j=1,\ldots,J$, maximizes Problem~\eqref{eq:offlineoptproblem}, $\sum_{j=1}^J\E\left[v_j\expalloc(b_j(v_j))-\exppay(b_j(v_j))\right])$, over $\bbf\in\mathbb{R}^J$, and the value obtained is $Z(\Fcal, \mathbf{V}, \rho)$.
\end{restatable}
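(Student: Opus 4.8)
The plan is to prove strong Lagrangian duality between Problem~\eqref{eq:offlineoptproblem} and its dual \eqref{eq:dualofflineproblem}, and then extract optimality of the value-pacing strategy from complementary slackness. Throughout, write $b_j^\star := \sigma_j\bigl(v_j/(1+\mu^*)\bigr)$ for the candidate profile, and let $L(\{b_j\},\mu) := \sum_{j=1}^J\E\bigl[v_j\expalloc(b_j(v_j))-\exppay(b_j(v_j))\bigr] + \mu\bigl(\rho - \sum_{j=1}^J\E[\exppay(b_j(v_j))]\bigr)$ denote the Lagrangian.

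\textbf{Step 1 (weak duality).} For any budget-feasible strategy $\{b_j(\cdot)\}$ and any $\mu\ge 0$, the slack $\rho-\sum_j\E[\exppay(b_j(v_j))]$ is nonnegative, so the primal objective is at most $L(\{b_j\},\mu)$, which in turn is at most $q(\mu,\mathbf{V},\Fcal,\rho)$ because $q$ maximizes the Lagrangian over all bidding functions. Taking the infimum over $\mu\ge 0$ and using that $\mu^*$ is a minimizer gives $Z(\Fcal,\mathbf{V},\rho)\le q(\mu^*,\mathbf{V},\Fcal,\rho)$. By Lemma~\ref{lemma:pacingmany} the inner maximizer defining $q(\mu^*,\mathbf{V},\Fcal,\rho)$ is exactly $b_j^\star$, so this dual value equals the simplified expression in \eqref{eq:substituteddualmany} evaluated at $\mu^*$.

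\textbf{Step 2 (first-order condition and feasibility).} Since $q(\mu,\mathbf{V},\Fcal,\rho)$ is an expectation of a pointwise maximum of functions affine in $\mu$, it is convex in $\mu$, so first-order optimality over $\mu\ge 0$ exactly characterizes $\mu^*$. By assumption~2 the inner maximizer is unique, so Danskin's theorem applies and $q$ is differentiable with $\frac{d}{d\mu}q(\mu,\mathbf{V},\Fcal,\rho) = \rho - \sum_{j=1}^J\E\bigl[\exppay(\sigma_j(v_j/(1+\mu)))\bigr]$; differentiating under the expectation is justified by the continuous differentiability and boundedness in assumption~1 via dominated convergence. Optimality of $\mu^*$ then gives two cases: if $\mu^*>0$ the derivative vanishes, so $\sum_j\E[\exppay(b_j^\star)]=\rho$; if $\mu^*=0$ the derivative is nonnegative, so $\sum_j\E[\exppay(b_j^\star)]\le\rho$. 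In both cases $b^\star$ is budget-feasible, and in both cases the complementary-slackness identity $\mu^*\bigl(\rho-\sum_j\E[\exppay(b_j^\star)]\bigr)=0$ holds.

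\textbf{Step 3 (closing the gap).} Evaluating \eqref{eq:substituteddualmany} at $\mu^*$ and applying complementary slackness, the $\mu^*$-weighted payment terms cancel against $\mu^*\rho$, leaving $q(\mu^*,\mathbf{V},\Fcal,\rho)=\sum_{j=1}^J\E\bigl[v_j\expalloc(b_j^\star)-\exppay(b_j^\star)\bigr]$, which is precisely the primal objective at $b^\star$. Combined with weak duality $Z(\Fcal,\mathbf{V},\rho)\le q(\mu^*,\mathbf{V},\Fcal,\rho)$ and the fact that the feasible $b^\star$ attains at most $Z(\Fcal,\mathbf{V},\rho)$, every inequality is an equality, so $b^\star$ is optimal and $Z(\Fcal,\mathbf{V},\rho)=q(\mu^*,\mathbf{V},\Fcal,\rho)$.

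The main obstacle is Step~2: rigorously justifying the Danskin/envelope differentiation of $q$ and the interchange of $d/d\mu$ with the expectation. This is exactly where assumptions~1 and~2 are essential — uniqueness of the best response guarantees that $q$ is genuinely differentiable rather than merely subdifferentiable (so the clean first-order condition is available), while continuous differentiability and boundedness of $\expalloc,\exppay$ supply the domination needed to push the derivative inside the expectation. Assumption~3 plays a supporting role: it ensures that the unconstrained best response $\sigma_j$ coincides with the inner maximizer over the constrained bid space $\mathbb{R}_{\ge 0}$ in \eqref{eq:offlineoptproblem} and \eqref{eq:dualofflineproblem}, so that the nonnegativity restriction on bids does not bind and the value-pacing profile $b^\star$ is admissible.
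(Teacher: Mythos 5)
Your proposal is correct and takes essentially the same route as the paper's proof: both establish that the dual is convex and differentiable with $q'(\mu)=\rho-\sum_{j=1}^J\E\bigl[\exppay\bigl(\sigma_j\bigl(v_j/(1+\mu)\bigr)\bigr)\bigr]$ (the paper via an explicit chain-rule computation in which the first-order condition \eqref{eq:sigmaderiv} cancels the allocation terms, you via Danskin's envelope theorem --- the same calculation in packaged form), and then read off primal feasibility and complementary slackness from the first-order optimality of $\mu^*$. The only cosmetic difference is at the end: the paper invokes the KKT sufficiency theorem (Theorem 5.1.5 of \cite{bertsekas1997nonlinear}), while your Step~3 reproves that sufficiency inline by weak duality plus the complementary-slackness cancellation, which is the same argument made self-contained.
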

It is important to note that the third assumption is satisfied in all VCG, GFP, and GSP auctions.

\section{Online Optimal Bidding Strategy Across Auctions}
\label{sec:online}
In this section, we present the \emph{Adaptive-Value-Pacing} algorithm, which is asymptotically optimal for the online bidding problem. In the online setting, the last bidder has access to the length of the horizon $T$ (i.e., the length of her campaign), her per round budget $\rho$, and the set of auction formats $(\Acal_1,\ldots,\Acal_J)$ from the start. At each round, all $J$ auctions run simultaneously, and the bidder must submit her bids (simultaneously to $J$ auctions based on her realized values). She then receives all other bids as feedback (we consider the full-information setting), together with the allocation and payment she gets in each auction at that round. We assume that the distribution of bids from other bidders in the same auction is identical, as in the offline setting. Thus, in auction $j$, the bids of all other bidders are identically distributed and drawn from a bounded distribution $F_j$.

The Adaptive-Value-Pacing (AVP) algorithm works by finding the optimal pacing multiplier and obtaining accurate estimates of the cumulative distribution functions of the other bidders' distributions in each auction. At each round $t$, the algorithm calculates a pacing multiplier $\mu_t$ and scales the values in each auction by multiplying them with $1/(1+\mu_t)$. The algorithm then submits bids that are estimated to be the best response to the current scaled values, while also ensuring that the total budget has not been exceeded. Upon receiving feedback after each round, the algorithm updates the pacing multiplier using a dual-based gradient update and continually refines its estimates of the best response functions for each auction through an analysis of the collected bid data. Our results, as stated in Theorem~\ref{thm:online}, show that under certain conditions, the Adaptive-Value-Pacing algorithm is asymptotically optimal with regret of $O(T^{3/4})$.

\begin{algorithm}[ht!]
    \caption{Adaptive Value Pacing for Repeated Auctions with Budget Constraint (\textsc{AVP})}
    \label{alg:online}
    \begin{algorithmic}[1]
        \State Input: total period $T$; average budget per round $\rho$; the set of auction formats $(\Acal_1,\ldots,\Acal_J)$; a learning parameter $\epsilon$.
        \State Output: bids in each auction at each period.
        \State Initialize a multiplier $\mu_1$ randomly in $[0, JU/\rho]$ and set the remaining budget to $\tilde{B}_{1} = \rho T$. Set the upper-bound on the multiplier $\bar{\mu} = JU/\rho$, and initialize the estimated best response functions: $\hatsigma_{j,1}(v) = v$ for each auction $j$ any $v\le U$.
        \For{$t = 1,2,\ldots,T$}
            \State \underline{Observing the realized values $\{v_{j,t}\}_{j=1}^J$}
            \State If the remaining budget $\tilde{B}_t< JU$, bid $0$ in each auction. Otherwise, choose a bid for each auction,            
            \begin{align*}
                b_{j,t} = \hatsigma_{j,t}\left(\frac{v_{j,t}}{1+\mu_t}\right),
            \end{align*}
            where
            \[ 
            \hatsigma_{j,t} (v) = \left\{
            \begin{array}{ll}
                  v & t = 1 \\
                  \arg\max_{b\in\mathbb{R}}\E_{\hat{F}_{j,t-1}}[v\alloc(\bbf_{j,-(n+1)},b)- \pay(\bbf_{j,-(n+1)},b)] & \text{otherwise.}
            \end{array} 
            \right. 
            \]
            \State \underline{Observing the bids $\{\bbf_{j,-(n+1),t}\}_{j\in[J]}$, allocation $\{\hat{x}_{j,t}\}_{j=1}^J$ and payment $\{\hat{p}_{j,t}\}_{j=1}^J$}\\
            \underline{for the last bidder} 
            \vspace{0.25cm}
            \State Updating the budget multiplier: $$\mu_{t+1} = \min\left\{\max\left\{0, \mu_{t} - \epsilon(\rho - \sum_{j\in[J]}\hat{p}_{j,t})\right\},\bar{\mu}\right\}.$$
            \State Updating the other bids' distributions: $\forall~j\in[J]$, $\hat{F}_{j,t}$ is the empirical cdf of the set of bid $\{\bbf_{j,-(n+1),i}\}_{i=1,\ldots,t}$\footnotemark.
            \State Set the remaining budget, $\tilde{B}_{t+1} = \tilde{B}_{t} - \sum_{j\in[J]}\hat{p}_{j,t}.$
        \EndFor
    \end{algorithmic}
\end{algorithm}
\footnotetext{Recall that, in auction $j$, each element of $\mathbf{b}_{j,n+1}$ is drawn from $V_j$.}

Prior to presenting Theorem~\ref{thm:online}, let us recall that, according to Proposition~\ref{prop:offlinetoonline}, the expected value $OPT(\bm{X},\rho)$ can be upper-bounded by $T\cdot Z(\Fcal, \mathbf{V},\rho)$. Furthermore, recall that the dual of $Z$ is $q(\mu,\Fcal,\mathbf{V},\rho)$. To simplify the notations, let $\Gamma(\mu) = q(\mu,\Fcal,\mathbf{V},\rho)$ for fixed sets of $(V_1,\ldots,V_J)$, $(F_1,\ldots,F_J)$, and  $\rho$. In the  theorem, we will impose two assumptions on the dual function $\Gamma$ to ensure the converge of Algorithm~\ref{alg:online} as a primal-dual algorithm. {Specifically, we require that the dual function $\Gamma(\mu)$ is strongly convex and thrice differentiable with respect to $\mu$, as inspired by \cite{balseiro2019learning}. Strong convexity is a standard assumption that is needed to guarantee that the pacing multipliers converge, while the thrice differentiability is needed to perform Taylor series expansion on the dual. It has been shown that for first-price auctions, these conditions hold when values and competing bids are independent and absolutely continuous with bounded densities and the valuation density is differentiable \cite{balseiro2019learning}. Similar results hold for GFP and GSP auctions.} We now state the theorem that guarantees the performance of our online algorithm, Algorithm~\ref{alg:online}.
\begin{restatable}{theorem}{onlinethm}
\label{thm:online}
Suppose that for each auction $j$, the last bidder's value $v_{j,t}$ is independently drawn from some stationary distribution $V_j$ such that the dual function $\Gamma(\mu)$ is thrice differentiable (with respect to $\mu$) with bounded derivatives and $\lambda$-strongly convex for some $\lambda>0$. Let $AVP$ be an adaptive pacing strategy as described in Algorithm~\ref{alg:online}, whose step size $\epsilon$ which is a function of $T$ satisfies the following: 1) $0<\epsilon<1/JU$, 2) $\lim_{T\rightarrow\infty}\epsilon = 0$, 3) $\lim_{T\rightarrow\infty}T\epsilon = \infty$, and 4) $\epsilon<1/\lambda$. Furthermore, suppose that for each auction, 1) the expected allocation ($\expalloc$) and payment ($\exppay$) functions are both $L$-Lipschitz for some constant $L> 0$\footnote{A Lipschitz function is continuously differentiable almost everywhere.}, and 2) the best response function is unique and continuously differentiable almost everywhere with respect to the last bidder's value. Then, for the bidding algorithm  $\beta = \text{AVP}$ in Algorithm~\ref{alg:online}, we have \begin{equation*}
\begin{aligned}
    \text{Reg}^\beta(T,\rho) 
    &= \E\Bigg[OPT(\bm{X}, \rho) - \Bigg(\sum_{t=1}^T\sum_{j=1}^J v_{j,t}\expalloc\left(b_{j,t}^{\beta}(v_{j,t})\right)- \exppay\left(b_{j,t}^{\beta}(v_{j,t})\right)\Bigg)\Bigg] \\
    &\leq O\left(T^{3/4}\right),
\end{aligned}
\end{equation*}
which is attained when the learning rate $\epsilon = O(T^{-1/4})$.
\end{restatable}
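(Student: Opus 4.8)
The plan is to treat Algorithm~\ref{alg:online} as a stochastic primal--dual scheme and to compare it not against $OPT(\bm{X},\rho)$ directly but against the offline value $T\cdot Z(\Fcal,\mathbf{V},\rho)$, which upper-bounds $\E[OPT(\bm{X},\rho)]$ by Proposition~\ref{prop:offlinetoonline}. By Theorem~\ref{thm:offline} and the dual reformulation in~\eqref{eq:substituteddualmany}, strong duality gives $Z(\Fcal,\mathbf{V},\rho)=\Gamma(\mu^*)$ with $\mu^*\in\arg\min_{\mu\ge0}\Gamma(\mu)$, and complementary slackness forces the optimal expected spend to equal $\rho$ whenever $\mu^*>0$. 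With this reduction I would write the regret as a sum over rounds of $\Gamma(\mu^*)$ minus the algorithm's realized expected per-round utility, and then split it into three pieces: a dual-tracking error measuring how far the iterates $\mu_t$ drift from $\mu^*$; a best-response estimation error from bidding $\hatsigma_{j,t}(\cdot)$ rather than $\sigma_j(\cdot)$; and a budget-boundary error from the rounds in which $\tilde{B}_t<JU$ forces a zero bid.

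For the dual piece I would observe that the increment $\rho-\realpay$ used in the update is a bounded, noisy estimate of $\Gamma'(\mu_t)$: its conditional mean equals $\Gamma'(\mu_t)$ plus a bias equal to the expected difference in spend between $\hatsigma_{j,t}$ and $\sigma_j$. Exploiting $\lambda$-strong convexity and the bounded first three derivatives of $\Gamma$, I would run the standard drift argument on $(\mu_t-\mu^*)^2$: expand the projected update, take conditional expectations, and telescope. This controls $\sum_t(\Gamma(\mu_t)-\Gamma(\mu^*))$ and, through the penalty term $\mu_t(\rho-\mathrm{spend})$, the Lagrangian slack, by quantities of order $1/\epsilon+\epsilon T$ plus the accumulated gradient bias; the conditions $\epsilon<1/\lambda$ and $\epsilon<1/JU$ are exactly what keep this recursion contractive and the iterates inside $[0,\bar{\mu}]$.

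For the estimation piece I would bound $\|\hat{F}_{j,t}-F_j\|_\infty$ by a uniform (DKW-type) concentration inequality, giving an $\tilde{O}(t^{-1/2})$ error, and then propagate it: since the allocation and payment are of bounded variation in the competitors' bids, the estimated per-auction objective is uniformly close to the true one, and the Lipschitz and unique-best-response assumptions turn this into a per-round bound on both the utility loss from $\hatsigma_{j,t}$ and the gradient bias feeding the dual update. Summing these per-round errors over $t$, bounding the budget-boundary term via the telescoped multiplier/budget identity (few rounds are lost because the total expected spend stays near $\rho T$), and then balancing the $O(1/\epsilon+\epsilon T)$ dual terms against the accumulated estimation error yields the claimed $O(T^{3/4})$ bound, attained at $\epsilon=O(T^{-1/4})$.

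The main obstacle is the coupling between learning and pacing: the estimated best response $\hatsigma_{j,t}$ enters simultaneously the realized utility and the stochastic subgradient that drives $\mu_{t+1}$, so the clean strongly-convex online-gradient analysis must be carried out with a time-varying, data-dependent bias that is itself correlated with the adaptively chosen multiplier path. Decoupling these dependences---e.g.\ conditioning on the filtration so that $\hat{F}_{j,t}$ concentrates independently of how $\mu_t$ was selected, while still certifying that the budget constraint~\eqref{eq:onlinebudgetconstraint} holds in expectation---is the delicate step. It is precisely this estimation layer, absent when the auction is incentive compatible and $\sigma_j$ is known in closed form (as in the second-price analysis this work builds on), that degrades the rate from $\sqrt{T}$ to $T^{3/4}$ and fixes the step size at $\epsilon=O(T^{-1/4})$.
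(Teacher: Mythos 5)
Your proposal follows essentially the same route as the paper's proof: both reduce the benchmark to $T\cdot\Gamma(\mu^*)$ via weak duality and Proposition~\ref{prop:offlinetoonline}, decompose the regret into a dual-tracking term (handled by a projected-gradient drift recursion on $\mathbb{E}[(\mu_t-\mu^*)^2]$ under $\lambda$-strong convexity), a best-response estimation term (DKW concentration giving $O(t^{-1/2})$ per-round error, $O(\sqrt{T})$ cumulative, which also enters as bias in the dual gradient), and a budget-depletion term (telescoping the multiplier updates to show few rounds are lost), and both balance $O(1/\epsilon+\epsilon T)$ against the estimation error divided by $\epsilon$ to get $O(T^{3/4})$ at $\epsilon=O(T^{-1/4})$. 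Your identification of the estimation-bias coupling as the source of the degradation from $\sqrt{T}$ to $T^{3/4}$ matches exactly how the paper's bound arises.
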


\begin{proof}
One crucial part to prove Algorithm~\ref{alg:online} is to ensure that the best response estimates for each auction improve over time. We demonstrate this through the application of the DKW inequality to the empirical cumulative distribution function, as shown in Lemma~\ref{lemma:BR-estimate}. This implies that, for all $t\in[T]$ and $j\in[J]$, there exists $\delta_{j,t}>0$ such that $\sup_{v\in\mathbb{R}_{\geq 0}}|\sigma_j(x) - \hat{\sigma}_{j,t}(x)|\leq\delta_{j,t}$. 

With that assumption, the proof consists of seven parts. Firstly, the expected utility in-hindsight is upper-bounded by weak duality. Secondly, the difference between the expected utility and expected expenditure in each round when using empirical CDF, compared to when calculated using the true CDF, is lower-bounded. Thirdly, the performance of the AVP strategy is lower-bounded by discarding the utility of all auctions after budget depletion, and it is shown that the budget is not depleted too soon. Fourthly, a second order Taylor expansion is performed on the expected utility per auction around $\mu^*$ to derive a lower-bound, where the absolute mean error $|\mu_t-\mu^*|$ appears as the first-order term and the mean squared error $(\mu_t-\mu^*)^2$ appears as the second-order term. Lastly, the expected absolute mean errors and mean squared errors, summed over all time periods, are upper-bounded. By combining these steps, the regret of the adaptive strategy is upper-bounded and the proof is concluded. The details of the proof are in Section~\ref{sec:proof-online-thm}. 
\end{proof}

 \section{Experiments}
\label{sec:experiment}
\subsection{Synthetic Experiments}

\textbf{Settings.} In this synthetic experiment, we analyze three position auctions: VCG, GFP, and GSP, with VCG being the only incentive-compatible auction. To standardize our results, we fix the number of auctions $J$ at two, the number of competitors $n$ at five, and the number of positions in each auction at three, as to guarantee that the number of bidders exceeds the number of available positions. Note that varying these parameters does not significantly impact the cumulative regret. We establish the discount factor (CTR value) as (1, 0.5, 0.25) for the first, second, and third positions respectively. We set the number of periods to be $T= 10,000$ to observe the long-term behavior of our algorithm and benchmark that we will define shortly.

We vary the format of auctions, distributions, and budget in this experiment. We consider three sets of auctions: in the first set, we have a VCG and a GFP auction, in the second set, we have two GFP auctions, and in the last set, we have a GSP and a GFP auction. The objective is to examine the performance of Algorithm~\ref{alg:online} across different types of auctions, both IC and non-IC. In each setting, we use two types of bid distributions: a lognormal distribution with $\mu=-0.3466$ and $\sigma=0.8326$, and another lognormal distribution with $\mu=-0.5493$ and $\sigma=1.0481$. These distributions are selected so that they both have a mean of $1$, but different variances (Var), where we note that $\text{Var}=1$ for the first distribution and $\text{Var}=2$ for the latter. This allows us  to see the impact of increased variance in competitor bids on the cumulative regret. Although these distributions are not completely bounded, they both have long tails to the right, as is typical for lognormal distributions. For example, the probability that a value drawn from the first distribution is in the interval $[0,10]$ is at least $0.999$, while the probability that a value drawn from the second distribution is in $[0,15]$ is at least $0.999$. For the value of the last bidder, we use the same distribution as the other bids, but multiply the value drawn from the distribution by a random multiplier in $[1,1.5]$, as bids are generally lower than the values. We also vary the average per round budget, $\rho$, to be in $\{1,3\}$, which are around the mean and median of both distributions. Finally, to assess the stability of the results, we conduct 50 runs for each setting and analyze the mean and standard error across the runs.

\textbf{Benchmark.} We compare our strategy to an adaptive pacing algorithm in \cite{balseiro2019learning} as a benchmark and calculate the regret with respect to the best expected utility in-hindsight. The adaptive pacing algorithm in \cite{balseiro2019learning} paces the values across auctions in each round based on the budget constraint, then bids the paced values in all auctions. However, unlike our algorithm, it does not use the best response function, and their paper only considers second price auctions and use the identity functions to bid.

\textbf{Implementation.} In our experiment, we implement the AVP algorithm with a learning rate of $\epsilon=T^{-1/4}$, which is based on the theoretical learning rate that minimizes regret in the analysis of AVP. For the benchmark, we try two learning rates, $T^{-1/2}$ and $T^{-1/4}$, then pick the learning rate that gives the highest utility for the benchmark, which happens to be $T^{-1/4}$. Additionally, we set the value of $U$ (the upper-bound on values) to 10 and 15 when the value distribution is lognormal with variance 1 and 2, respectively. This is because the values drawn from each distribution have a high probability of being in the range [0, U].

\begin{figure}[ht!]
    \centering
    \includegraphics[width=0.9\textwidth]{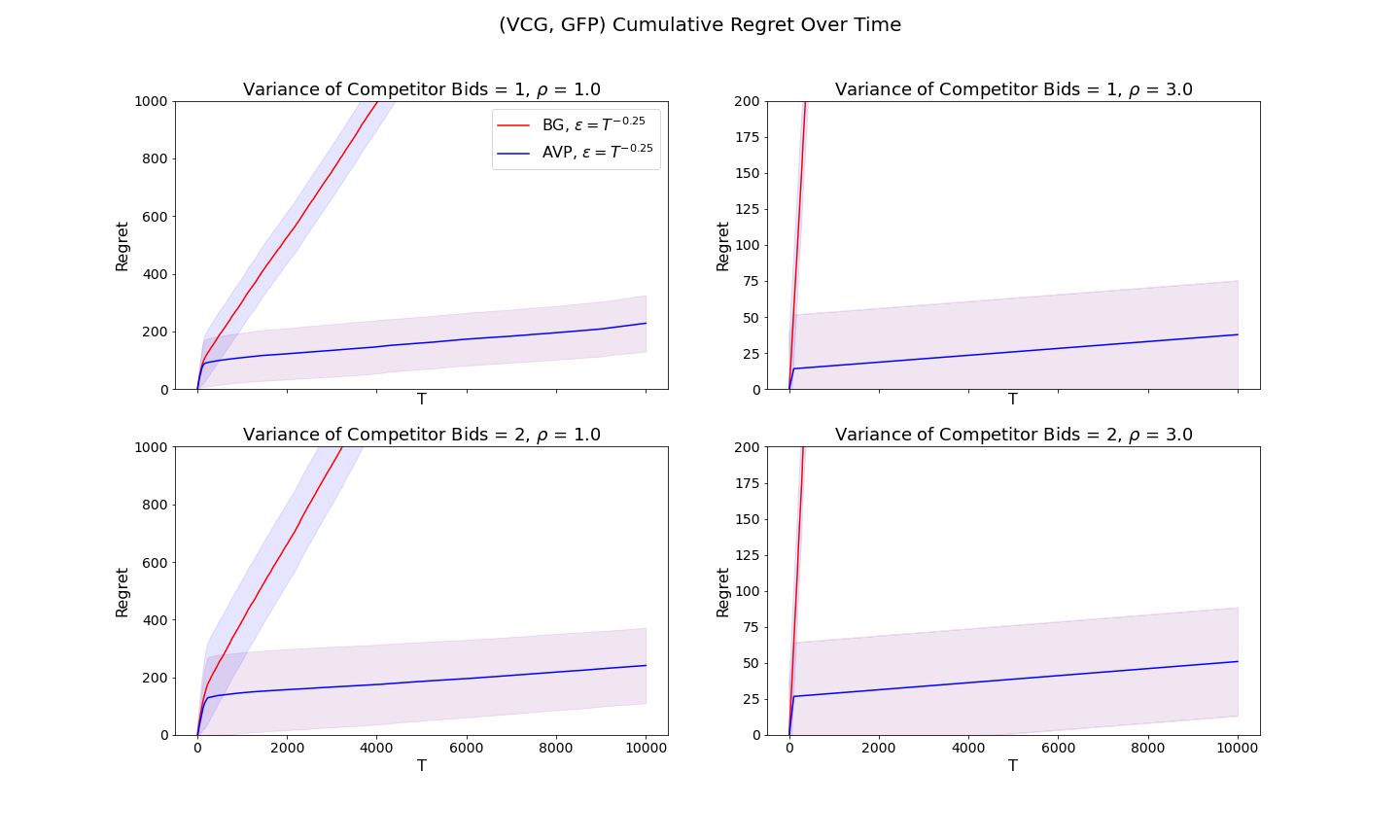}
    \caption{Cumulative regret over time on VCG and GFP Auctions. }
    \label{fig:cumregret_VCG}
\end{figure}

\begin{figure}[ht!]
    \centering
    \includegraphics[width=0.9\textwidth]{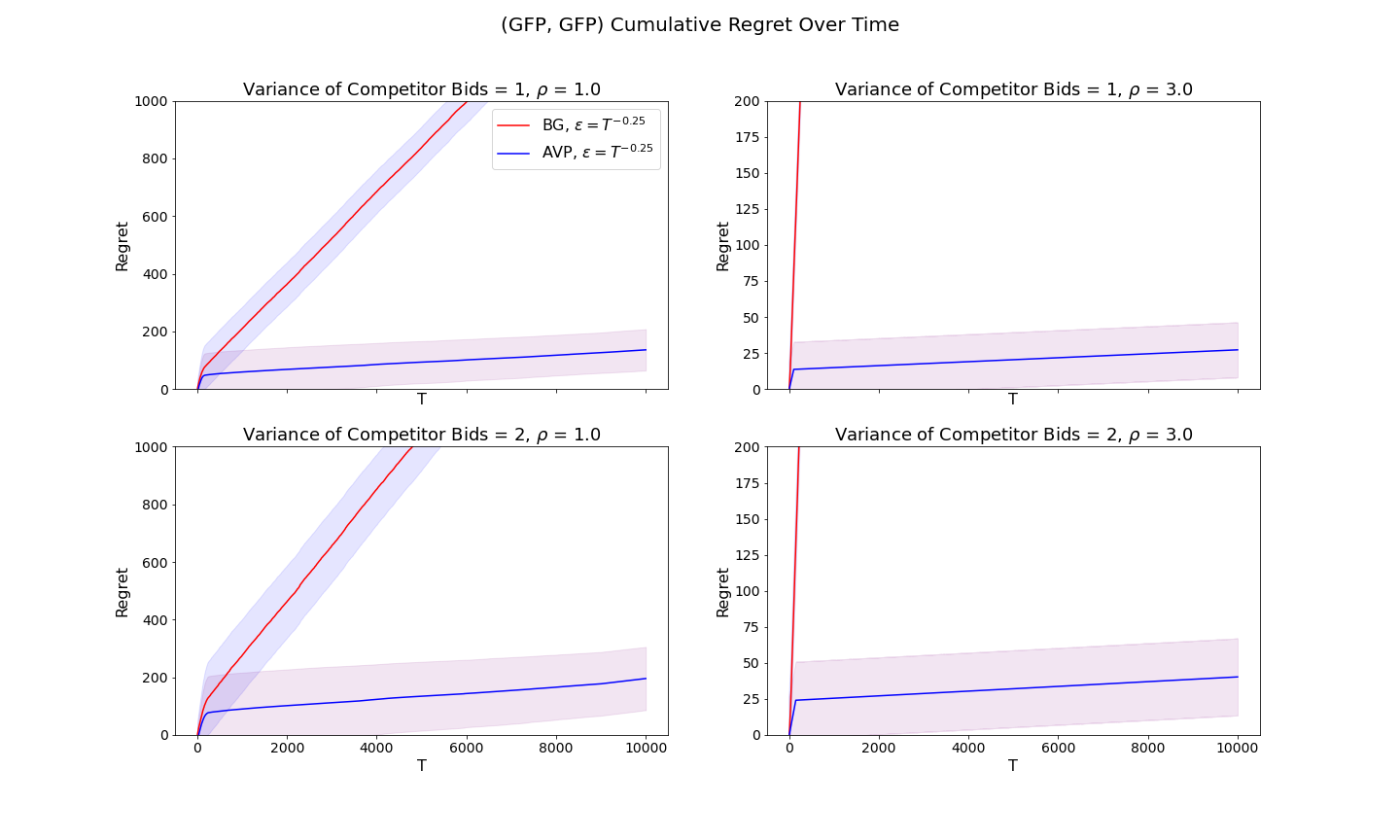}
    \caption{Cumulative regret over time on GFP Auctions.}
    \label{fig:cumregret_GFP}
\end{figure}

\begin{figure}[ht!]
    \centering
    \includegraphics[width=0.9\textwidth]{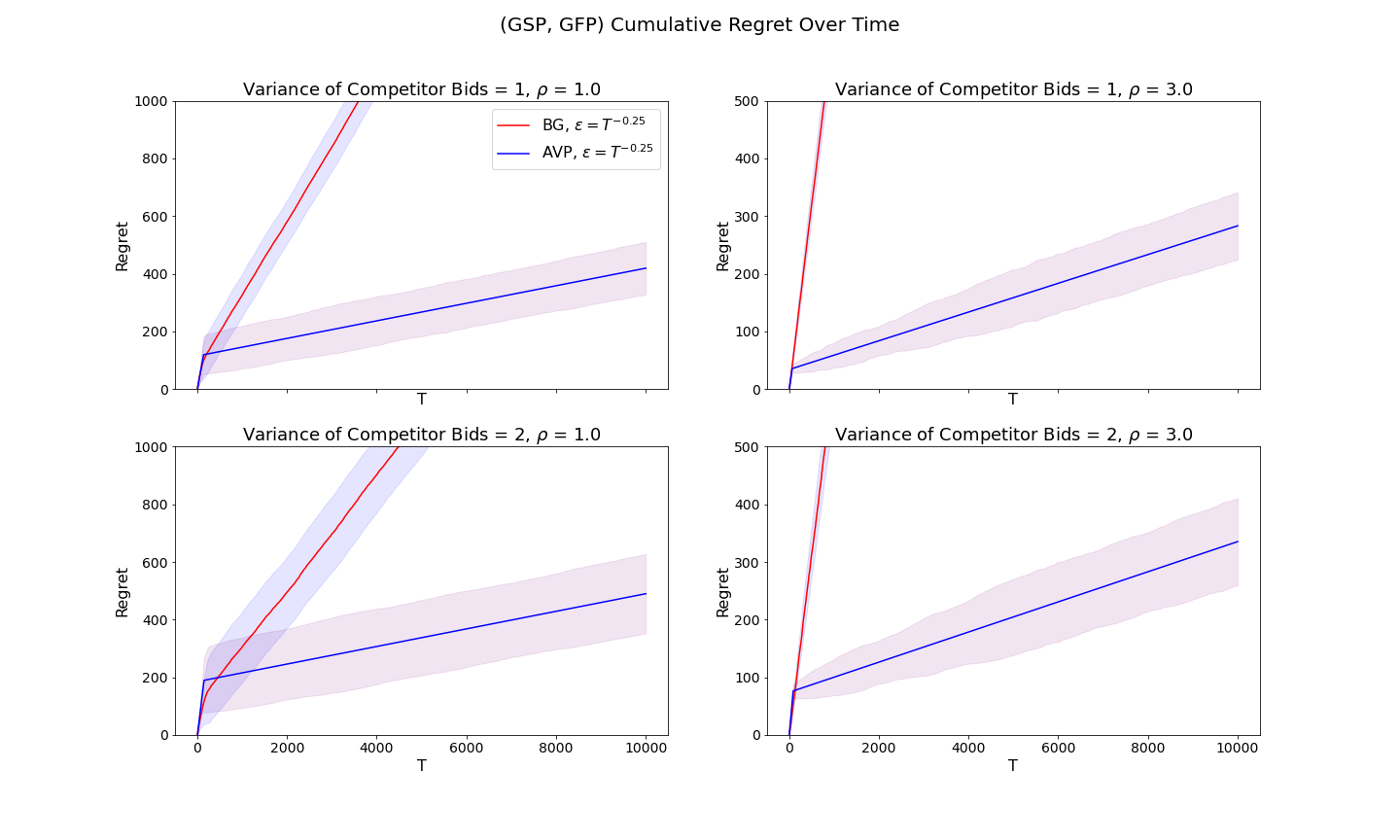}
    \caption{Cumulative regret over time on GSP and GFP Auctions.}
    \label{fig:cumregret_GSP}
\end{figure}

\textbf{Results.} The mean cumulative regret over time for the AVP algorithm and the benchmark algorithm are shown along with the standard errors, obtained over 50 runs, in Figures~\ref{fig:cumregret_VCG}, \ref{fig:cumregret_GFP}, and \ref{fig:cumregret_GSP} for the (VCG,GFP), (GFP,GFP), and (GSP,GFP) settings respectively. Our results demonstrate that the AVP algorithm outperforms the benchmark, which exhibits almost linear regret in all settings. Even in the first setting, which includes a VCG\footnote{For VCG, our algorithm is identical with the benchmark since the best response function in VCG is the identity function, but this is not the case with GFP and GSP.} auction in each round, the AVP algorithm displays significantly better regret compared to the benchmark. We observe that both the AVP algorithm  perform similarly in the first (VCG,GFP) and second (GFP,GFP) settings, indicating that the regret from GFP is likely to dominate the regret from VCG. Moreover, based on the gradients of our cumulative regret, we see that our cumulative regrets are $O(T^{3/4})$, which is the theoretical upper-bound on regret. In the third setting, which includes GSP, the gradients of the cumulative regret of AVP are noticeably larger, compared to the previous settings, although the asymptotic growths are still $O(T^{3/4})$, which is the theoretical upper-bound on the cumulative regret. We can observe this in Figure~\ref{fig:normalized_regret_GSP}, where for AVP (the left figure), when the cumulative regret is normalized by $T^{3/4}$, all the four curves are converging to constant lines, while for the benchmark (the right figure), the normalized cumulative regrets are still increasing. The cumulative regrets are worse in the third setting due to the lack of unique or continuously differentiable best response functions in GSP.

\begin{figure}[ht!]
    \centering
    \includegraphics[width=0.95\textwidth]{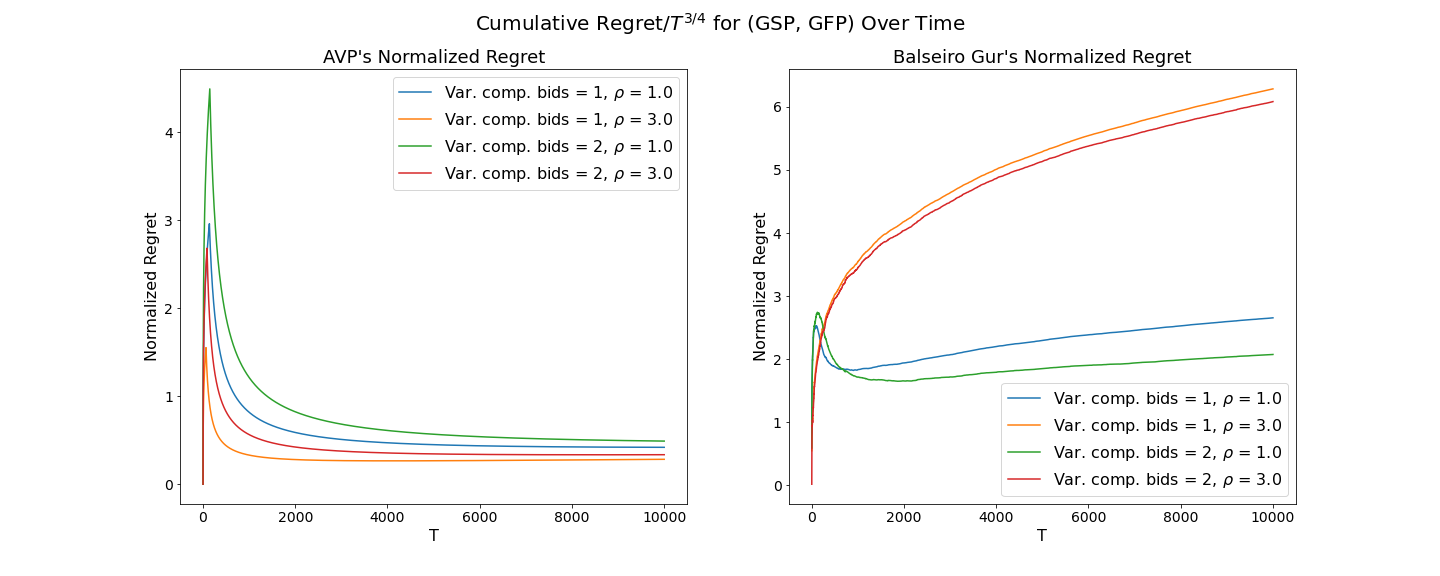}
    \caption{Normalized cumulative regret over time (divided by $T^{3/4}$) on GSP and GFP Auctions, i.e., the normalized regret at time $t$ is the cumulative regret up to time $t$ divided by $t^{3/4}$. See that for AVP (left), the normalized regret curves all converge to constant lines, while for the benchmark (right), the normalized regret curves are not converging.}
    \label{fig:normalized_regret_GSP}
\end{figure}

In terms of variance, we observe that the cumulative regret produced by the AVP algorithm has smaller variance compared to the benchmark in all cases (although this may not be evident in the plots due to the relatively large mean values of the benchmark cumulative regret). The variance of the cumulative regret decreases as the budget per round increases and vice versa in all settings. Moreover, as expected, the cumulative regret variance is higher when the variance of competitor bids is higher, and vice versa.

Finally, it is important to note that the assumptions of our theorem are not fully satisfied by our settings. For instance, the values are not fully bounded, and some best response functions are not unique or have continuous derivatives. Despite this, our empirical results indicate that the cumulative regret is $O(T^{3/4})$.

\subsection{Real Data Experiments}
We construct a real-world dataset obtained from log of a large online internet advertising company. The bids that were logged are normalized to obtain a representative distribution of bids.\footnote{The auctions themselves are simulated in the same manner as for synthetic data. While this means that the bidding data that is used is not directly for the auction formats that are studied, we believe the data is representative enough to understand the effects of different auction formats on the optimal algorithm.} Similar to the synthetic setting, the value of the last bidder is drawn from the same distribution as the competing bids, but the value is multiplied by a random multiplier in the range of $[1, 1.5]$ to account for bidders commonly shading their values. We study the setting of two GFP auctions, each with five competitors and discount factors of $(1, 0.5, 0.25)$, similar to the synthetic case. Furthermore, the algorithms, benchmarks, and implementations are the same as those used in the synthetic case.

\begin{figure}[ht!]
    \centering
    \includegraphics[width=0.8\textwidth]{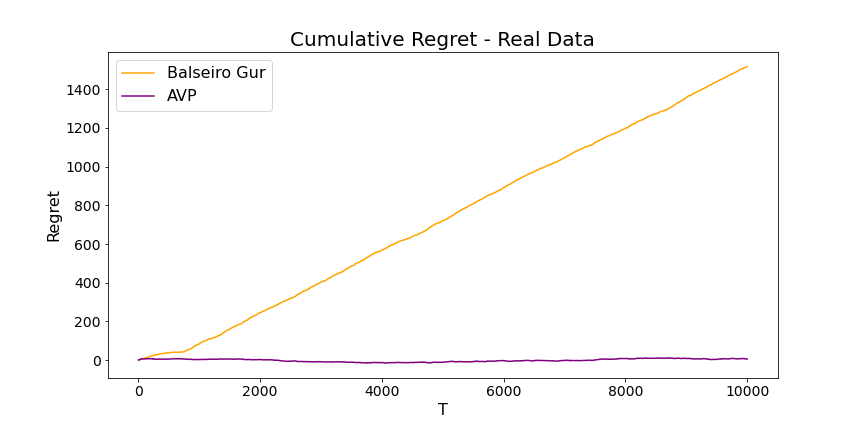}
    \caption{Cumulative regret for the (GFP,GFP) setting on real data. }
    \label{fig:regret_real}
\end{figure}

\textbf{Results.} We plot the cumulative regret of AVP and the benchmark algorithm in Figure~\ref{fig:regret_real}. Similar to the experiments on synthetic data, the vanishing (average) regret of AVP mean that the cumulative regret approaches an asymptote, whose value is dependent on the learning rate we choose. The benchmark algorithm does not have vanishing regret and thus shows linear cumulative regret.
 
\section{Conclusion and Future Directions}
As the online advertising industry continues to grow, advertisers are presented with a plethora of channels and platforms to promote their products. However, effectively allocating their budget across multiple platforms poses a significant challenge, as each platform operates with a different auction format that may not be incentive compatible. An advertiser's goal is, nevertheless, to achieve maximum utility within her budget constraints, which requires balancing current and future bidding opportunities across platforms with distinct sets of competitors. Furthermore, advertisers often lack prior knowledge of their competitors' bidding behaviors in each platform, adding further complexity to the problem.

We present an optimal budget management strategy in multi-platform advertising markets, where advertisers compete in auctions that may have varying and potentially non-incentive compatible formats. We propose a value-pacing strategy that allocates an advertiser's budget across multiple platforms and maximizes her expected (quasi-linear) utility. We also investigate the online setting, where the advertiser must submit bids while learning about other bidders' bids and their own bid distributions over time. Our adaptive budget optimization no-regret algorithm has $O(T^{3/4})$ regret under the full-information setting and empirically outperforms the current benchmark in synthetic and real datasets. Our work contributes to the literature by expanding upon previous research and presenting a general solution applicable to a broader range of auctions and platforms.

Our work suggests several avenues for future technical exploration. Firstly, for the online setting, it would be beneficial to identify weaker assumptions than those required for Theorem \ref{thm:online}, which necessitates thrice differentiability and strong convexity of the dual function. Although our empirical results suggest that Algorithm \ref{alg:online} produces desired outcomes even in the absence of these assumptions, a comprehensive empirical evaluation across various regimes would be valuable. Additionally, developing a better regret bound for the online setting in the full-information scenario and extending our algorithm to the bandit setting through careful analysis are promising directions for future research. Finally, evaluating our algorithm on larger, more complex datasets is essential for assessing its performance in a real-world scenario, where it can compete with actual bidders.

\bibliographystyle{ACM-Reference-Format}
\bibliography{ref}


\begin{thebibliography}{41}


\ifx \showCODEN    \undefined \def \showCODEN     #1{\unskip}     \fi
\ifx \showDOI      \undefined \def \showDOI       #1{#1}\fi
\ifx \showISBNx    \undefined \def \showISBNx     #1{\unskip}     \fi
\ifx \showISBNxiii \undefined \def \showISBNxiii  #1{\unskip}     \fi
\ifx \showISSN     \undefined \def \showISSN      #1{\unskip}     \fi
\ifx \showLCCN     \undefined \def \showLCCN      #1{\unskip}     \fi
\ifx \shownote     \undefined \def \shownote      #1{#1}          \fi
\ifx \showarticletitle \undefined \def \showarticletitle #1{#1}   \fi
\ifx \showURL      \undefined \def \showURL       {\relax}        \fi
\providecommand\bibfield[2]{#2}
\providecommand\bibinfo[2]{#2}
\providecommand\natexlab[1]{#1}
\providecommand\showeprint[2][]{arXiv:#2}

\bibitem[Abrams et~al\mbox{.}(2008)]%
        {abrams2008comprehensive}
\bibfield{author}{\bibinfo{person}{Zoë Abrams}, \bibinfo{person}{S
  Sathiya~Keerthi}, \bibinfo{person}{Ofer Mendelevitch}, {and}
  \bibinfo{person}{John Tomlin}.} \bibinfo{year}{2008}\natexlab{}.
\newblock \showarticletitle{Ad delivery with budgeted advertisers: A
  comprehensive LP approach}.
\newblock \bibinfo{journal}{\emph{Journal of Electronic Commerce Research}}
  \bibinfo{volume}{9} (\bibinfo{date}{01} \bibinfo{year}{2008}).
\newblock


\bibitem[Aggarwal et~al\mbox{.}(2019)]%
        {aggarwal2019autobidding}
\bibfield{author}{\bibinfo{person}{Gagan Aggarwal},
  \bibinfo{person}{Ashwinkumar Badanidiyuru}, {and} \bibinfo{person}{Aranyak
  Mehta}.} \bibinfo{year}{2019}\natexlab{}.
\newblock \showarticletitle{Autobidding with constraints}. In
  \bibinfo{booktitle}{\emph{International Conference on Web and Internet
  Economics}}. Springer, \bibinfo{pages}{17--30}.
\newblock


\bibitem[Avadhanula et~al\mbox{.}(2021)]%
        {avadhanula2021}
\bibfield{author}{\bibinfo{person}{Vashist Avadhanula},
  \bibinfo{person}{Riccardo Colini~Baldeschi}, \bibinfo{person}{Stefano
  Leonardi}, \bibinfo{person}{Karthik~Abinav Sankararaman}, {and}
  \bibinfo{person}{Okke Schrijvers}.} \bibinfo{year}{2021}\natexlab{}.
\newblock \showarticletitle{Stochastic Bandits for Multi-Platform Budget
  Optimization in Online Advertising}. In \bibinfo{booktitle}{\emph{Proceedings
  of the Web Conference 2021}} (Ljubljana, Slovenia)
  \emph{(\bibinfo{series}{WWW '21})}. \bibinfo{publisher}{Association for
  Computing Machinery}, \bibinfo{address}{New York, NY, USA},
  \bibinfo{pages}{2805–2817}.
\newblock
\showISBNx{9781450383127}
\urldef\tempurl%
\url{https://doi.org/10.1145/3442381.3450074}
\showDOI{\tempurl}


\bibitem[Azar et~al\mbox{.}(2009)]%
        {azar2009gsp}
\bibfield{author}{\bibinfo{person}{Yossi Azar}, \bibinfo{person}{Benjamin
  Birnbaum}, \bibinfo{person}{Anna~R. Karlin}, {and} \bibinfo{person}{C.~Thach
  Nguyen}.} \bibinfo{year}{2009}\natexlab{}.
\newblock \showarticletitle{On Revenue Maximization in Second-Price Ad
  Auctions}. In \bibinfo{booktitle}{\emph{Algorithms - ESA 2009}}
  \emph{(\bibinfo{series}{Lecture Notes in Computer Science},
  Vol.~\bibinfo{volume}{5757})}, \bibfield{editor}{\bibinfo{person}{Amos Fiat}
  {and} \bibinfo{person}{Peter Sanders}} (Eds.). \bibinfo{publisher}{Springer},
  \bibinfo{address}{Berlin, Heidelberg}, \bibinfo{pages}{155--166}.
\newblock


\bibitem[Balseiro et~al\mbox{.}(2017)]%
        {balseiro2017budget}
\bibfield{author}{\bibinfo{person}{Santiago Balseiro}, \bibinfo{person}{Anthony
  Kim}, \bibinfo{person}{Mohammad Mahdian}, {and} \bibinfo{person}{Vahab
  Mirrokni}.} \bibinfo{year}{2017}\natexlab{}.
\newblock \showarticletitle{Budget management strategies in repeated auctions}.
  In \bibinfo{booktitle}{\emph{Proceedings of the 26th International Conference
  on World Wide Web}}. \bibinfo{pages}{15--23}.
\newblock


\bibitem[Balseiro and Gur(2019)]%
        {balseiro2019learning}
\bibfield{author}{\bibinfo{person}{Santiago~R Balseiro} {and}
  \bibinfo{person}{Yonatan Gur}.} \bibinfo{year}{2019}\natexlab{}.
\newblock \showarticletitle{Learning in repeated auctions with budgets: Regret
  minimization and equilibrium}.
\newblock \bibinfo{journal}{\emph{Management Science}} \bibinfo{volume}{65},
  \bibinfo{number}{9} (\bibinfo{year}{2019}), \bibinfo{pages}{3952--3968}.
\newblock


\bibitem[Balseiro et~al\mbox{.}(2022)]%
        {balseiro2022contextual}
\bibfield{author}{\bibinfo{person}{Santiago~R Balseiro},
  \bibinfo{person}{Christian Kroer}, {and} \bibinfo{person}{Rachitesh Kumar}.}
  \bibinfo{year}{2022}\natexlab{}.
\newblock \showarticletitle{Contextual Standard Auctions with Budgets: Revenue
  Equivalence and Efficiency Guarantees}. In
  \bibinfo{booktitle}{\emph{Proceedings of the 23rd ACM Conference on Economics
  and Computation}}. \bibinfo{pages}{476--476}.
\newblock


\bibitem[Bertsekas(1997)]%
        {bertsekas1997nonlinear}
\bibfield{author}{\bibinfo{person}{Dimitri~P Bertsekas}.}
  \bibinfo{year}{1997}\natexlab{}.
\newblock \showarticletitle{Nonlinear programming}.
\newblock \bibinfo{journal}{\emph{Journal of the Operational Research Society}}
  \bibinfo{volume}{48}, \bibinfo{number}{3} (\bibinfo{year}{1997}),
  \bibinfo{pages}{334--334}.
\newblock


\bibitem[Brustle et~al\mbox{.}(2022)]%
        {brustle2022price}
\bibfield{author}{\bibinfo{person}{Johannes Brustle}, \bibinfo{person}{Paul
  D{\"u}tting}, {and} \bibinfo{person}{Balasubramanian Sivan}.}
  \bibinfo{year}{2022}\natexlab{}.
\newblock \showarticletitle{Price manipulability in first-price auctions}. In
  \bibinfo{booktitle}{\emph{Proceedings of the ACM Web Conference 2022}}.
  \bibinfo{pages}{58--67}.
\newblock


\bibitem[Cai and Daskalakis(2017)]%
        {CaiD17}
\bibfield{author}{\bibinfo{person}{Yang Cai} {and}
  \bibinfo{person}{Constantinos Daskalakis}.} \bibinfo{year}{2017}\natexlab{}.
\newblock \showarticletitle{Learning Multi-item Auctions with (or without)
  Samples}. In \bibinfo{booktitle}{\emph{FOCS}}.
\newblock


\bibitem[Chen et~al\mbox{.}(2022)]%
        {chen2022dynamic}
\bibfield{author}{\bibinfo{person}{Zhaohua Chen}, \bibinfo{person}{Chang Wang},
  \bibinfo{person}{Qian Wang}, \bibinfo{person}{Yuqi Pan},
  \bibinfo{person}{Zhuming Shi}, \bibinfo{person}{Chuyue Tang},
  \bibinfo{person}{Zheng Cai}, \bibinfo{person}{Yukun Ren},
  \bibinfo{person}{Zhihua Zhu}, {and} \bibinfo{person}{Xiaotie Deng}.}
  \bibinfo{year}{2022}\natexlab{}.
\newblock \showarticletitle{Dynamic Budget Throttling in Repeated Second-Price
  Auctions}.
\newblock \bibinfo{journal}{\emph{arXiv preprint arXiv:2207.04690}}
  (\bibinfo{year}{2022}).
\newblock


\bibitem[Conitzer et~al\mbox{.}(2022a)]%
        {conitzer2022pacing}
\bibfield{author}{\bibinfo{person}{Vincent Conitzer},
  \bibinfo{person}{Christian Kroer}, \bibinfo{person}{Debmalya Panigrahi},
  \bibinfo{person}{Okke Schrijvers}, \bibinfo{person}{Nicolas~E Stier-Moses},
  \bibinfo{person}{Eric Sodomka}, {and} \bibinfo{person}{Christopher~A
  Wilkens}.} \bibinfo{year}{2022}\natexlab{a}.
\newblock \showarticletitle{Pacing equilibrium in first price auction markets}.
\newblock \bibinfo{journal}{\emph{Management Science}} \bibinfo{volume}{68},
  \bibinfo{number}{12} (\bibinfo{year}{2022}), \bibinfo{pages}{8515--8535}.
\newblock


\bibitem[Conitzer et~al\mbox{.}(2022b)]%
        {conitzer2021pacing}
\bibfield{author}{\bibinfo{person}{Vincent Conitzer},
  \bibinfo{person}{Christian Kroer}, \bibinfo{person}{Eric Sodomka}, {and}
  \bibinfo{person}{Nicolas~E Stier-Moses}.} \bibinfo{year}{2022}\natexlab{b}.
\newblock \showarticletitle{Multiplicative pacing equilibria in auction
  markets}.
\newblock \bibinfo{journal}{\emph{Operations Research}} \bibinfo{volume}{70},
  \bibinfo{number}{2} (\bibinfo{year}{2022}), \bibinfo{pages}{963--989}.
\newblock


\bibitem[Conitzer et~al\mbox{.}(2022c)]%
        {conitzer2022multiplicative}
\bibfield{author}{\bibinfo{person}{Vincent Conitzer},
  \bibinfo{person}{Christian Kroer}, \bibinfo{person}{Eric Sodomka}, {and}
  \bibinfo{person}{Nicolas~E Stier-Moses}.} \bibinfo{year}{2022}\natexlab{c}.
\newblock \showarticletitle{Multiplicative pacing equilibria in auction
  markets}.
\newblock \bibinfo{journal}{\emph{Operations Research}} \bibinfo{volume}{70},
  \bibinfo{number}{2} (\bibinfo{year}{2022}), \bibinfo{pages}{963--989}.
\newblock


\bibitem[Deng et~al\mbox{.}(2022a)]%
        {deng2022fairness}
\bibfield{author}{\bibinfo{person}{Yuan Deng}, \bibinfo{person}{Negin
  Golrezaei}, \bibinfo{person}{Patrick Jaillet}, \bibinfo{person}{Jason
  Cheuk~Nam Liang}, {and} \bibinfo{person}{Vahab Mirrokni}.}
  \bibinfo{year}{2022}\natexlab{a}.
\newblock \showarticletitle{Fairness in the Autobidding World with
  Machine-learned Advice}.
\newblock \bibinfo{journal}{\emph{arXiv preprint arXiv:2209.04748}}
  (\bibinfo{year}{2022}).
\newblock


\bibitem[Deng et~al\mbox{.}(2023)]%
        {deng2023multi}
\bibfield{author}{\bibinfo{person}{Yuan Deng}, \bibinfo{person}{Negin
  Golrezaei}, \bibinfo{person}{Patrick Jaillet}, \bibinfo{person}{Jason
  Cheuk~Nam Liang}, {and} \bibinfo{person}{Vahab Mirrokni}.}
  \bibinfo{year}{2023}\natexlab{}.
\newblock \showarticletitle{Multi-channel Autobidding with Budget and ROI
  Constraints}.
\newblock \bibinfo{journal}{\emph{arXiv preprint arXiv:2302.01523}}
  (\bibinfo{year}{2023}).
\newblock


\bibitem[Deng et~al\mbox{.}(2022b)]%
        {deng2022efficiency}
\bibfield{author}{\bibinfo{person}{Yuan Deng}, \bibinfo{person}{Jieming Mao},
  \bibinfo{person}{Vahab Mirrokni}, \bibinfo{person}{Hanrui Zhang}, {and}
  \bibinfo{person}{Song Zuo}.} \bibinfo{year}{2022}\natexlab{b}.
\newblock \showarticletitle{Efficiency of the first-price auction in the
  autobidding world}.
\newblock \bibinfo{journal}{\emph{arXiv preprint arXiv:2208.10650}}
  (\bibinfo{year}{2022}).
\newblock


\bibitem[Dud{\'{\i}}k et~al\mbox{.}(2017)]%
        {DudikHLSSV17}
\bibfield{author}{\bibinfo{person}{Miroslav Dud{\'{\i}}k},
  \bibinfo{person}{Nika Haghtalab}, \bibinfo{person}{Haipeng Luo},
  \bibinfo{person}{Robert~E. Schapire}, \bibinfo{person}{Vasilis Syrgkanis},
  {and} \bibinfo{person}{Jennifer~Wortman Vaughan}.}
  \bibinfo{year}{2017}\natexlab{}.
\newblock \showarticletitle{Oracle-Efficient Learning and Auction Design}. In
  \bibinfo{booktitle}{\emph{FOCS}}.
\newblock


\bibitem[Dvoretzky et~al\mbox{.}(1956)]%
        {dvoretzky1956asymptotic}
\bibfield{author}{\bibinfo{person}{Aryeh Dvoretzky}, \bibinfo{person}{Jack
  Kiefer}, {and} \bibinfo{person}{Jacob Wolfowitz}.}
  \bibinfo{year}{1956}\natexlab{}.
\newblock \showarticletitle{Asymptotic minimax character of the sample
  distribution function and of the classical multinomial estimator}.
\newblock \bibinfo{journal}{\emph{The Annals of Mathematical Statistics}}
  (\bibinfo{year}{1956}), \bibinfo{pages}{642--669}.
\newblock


\bibitem[Feldman et~al\mbox{.}(2007)]%
        {feldman2007budget}
\bibfield{author}{\bibinfo{person}{Jon Feldman}, \bibinfo{person}{S
  Muthukrishnan}, \bibinfo{person}{Martin Pal}, {and} \bibinfo{person}{Cliff
  Stein}.} \bibinfo{year}{2007}\natexlab{}.
\newblock \showarticletitle{Budget Optimization in Search-Based Advertising
  Auctions}. In \bibinfo{booktitle}{\emph{Proceedings of the 8th ACM Conference
  on Electronic Commerce}} (San Diego, California, USA)
  \emph{(\bibinfo{series}{EC '07})}. \bibinfo{publisher}{Association for
  Computing Machinery}, \bibinfo{address}{New York, NY, USA},
  \bibinfo{pages}{40--49}.
\newblock
\urldef\tempurl%
\url{https://doi.org/10.1145/1250910.1250917}
\showURL{%
\tempurl}


\bibitem[Goel et~al\mbox{.}(2010)]%
        {goel2010gsp}
\bibfield{author}{\bibinfo{person}{Ashish Goel}, \bibinfo{person}{Mohammad
  Mahdian}, \bibinfo{person}{Hamid Nazerzadeh}, {and} \bibinfo{person}{Amin
  Saberi}.} \bibinfo{year}{2010}\natexlab{}.
\newblock \showarticletitle{Advertisement Allocation for Generalized
  Second-Pricing Schemes}.
\newblock \bibinfo{journal}{\emph{Operations Research Letters}}
  \bibinfo{volume}{38}, \bibinfo{number}{6} (\bibinfo{date}{Nov.}
  \bibinfo{year}{2010}), \bibinfo{pages}{571--576}.
\newblock


\bibitem[Golrezaei et~al\mbox{.}(2021a)]%
        {golrezaei2021bidding}
\bibfield{author}{\bibinfo{person}{Negin Golrezaei}, \bibinfo{person}{Patrick
  Jaillet}, \bibinfo{person}{Jason Cheuk~Nam Liang}, {and}
  \bibinfo{person}{Vahab Mirrokni}.} \bibinfo{year}{2021}\natexlab{a}.
\newblock \showarticletitle{Bidding and pricing in budget and roi constrained
  markets}.
\newblock \bibinfo{journal}{\emph{arXiv preprint arXiv:2107.07725}}
  (\bibinfo{year}{2021}).
\newblock


\bibitem[Golrezaei et~al\mbox{.}(2021b)]%
        {golrezaei2018dynamic}
\bibfield{author}{\bibinfo{person}{Negin Golrezaei}, \bibinfo{person}{Adel
  Javanmard}, {and} \bibinfo{person}{Vahab Mirrokni}.}
  \bibinfo{year}{2021}\natexlab{b}.
\newblock \showarticletitle{Dynamic incentive-aware learning: Robust pricing in
  contextual auctions}.
\newblock \bibinfo{journal}{\emph{Operations Research}} \bibinfo{volume}{69},
  \bibinfo{number}{1} (\bibinfo{year}{2021}), \bibinfo{pages}{297--314}.
\newblock


\bibitem[Golrezaei et~al\mbox{.}(2019)]%
        {golrezaei2019IC}
\bibfield{author}{\bibinfo{person}{Negin Golrezaei}, \bibinfo{person}{Jason
  Cheuk~Nam Liang}, {and} \bibinfo{person}{Patrick Jaillet}.}
  \bibinfo{year}{2019}\natexlab{}.
\newblock \showarticletitle{Incentive-aware contextual pricing with
  non-parametric market noise}.
\newblock  (\bibinfo{year}{2019}).
\newblock


\bibitem[Han et~al\mbox{.}(2020)]%
        {han2020optimal}
\bibfield{author}{\bibinfo{person}{Yanjun Han}, \bibinfo{person}{Zhengyuan
  Zhou}, {and} \bibinfo{person}{Tsachy Weissman}.}
  \bibinfo{year}{2020}\natexlab{}.
\newblock \showarticletitle{Optimal no-regret learning in repeated first-price
  auctions}.
\newblock \bibinfo{journal}{\emph{arXiv preprint arXiv:2003.09795}}
  (\bibinfo{year}{2020}).
\newblock


\bibitem[Hosanagar and Cherepanov(2008)]%
        {hosanagar2008}
\bibfield{author}{\bibinfo{person}{Kartik Hosanagar} {and}
  \bibinfo{person}{Vadim Cherepanov}.} \bibinfo{year}{2008}\natexlab{}.
\newblock \showarticletitle{Optimal Bidding in Stochastic Budget Constrained
  Slot Auctions}. In \bibinfo{booktitle}{\emph{Proceedings of the 9th ACM
  Conference on Electronic Commerce}} (Chicago, Il, USA)
  \emph{(\bibinfo{series}{EC '08})}. \bibinfo{publisher}{Association for
  Computing Machinery}, \bibinfo{address}{New York, NY, USA},
  \bibinfo{pages}{20}.
\newblock
\urldef\tempurl%
\url{https://doi.org/10.1145/1386790.1386794}
\showURL{%
\tempurl}


\bibitem[Kanoria and Nazerzadeh(2014)]%
        {kanoria2014dynamic}
\bibfield{author}{\bibinfo{person}{Yash Kanoria} {and} \bibinfo{person}{Hamid
  Nazerzadeh}.} \bibinfo{year}{2014}\natexlab{}.
\newblock \showarticletitle{Dynamic Reserve Prices for Repeated Auctions:
  Learning from Bids}. In \bibinfo{booktitle}{\emph{Web and Internet Economics:
  10th International Conference, WINE 2014, Beijing, China, December 14-17,
  2014, Proceedings}}, Vol.~\bibinfo{volume}{8877}. Springer,
  \bibinfo{pages}{232}.
\newblock


\bibitem[Karande et~al\mbox{.}(2013)]%
        {karande2013pacing}
\bibfield{author}{\bibinfo{person}{Chinmay Karande}, \bibinfo{person}{Aranyak
  Mehta}, {and} \bibinfo{person}{Ramakrishnan Srikant}.}
  \bibinfo{year}{2013}\natexlab{}.
\newblock \showarticletitle{Optimizing Budget Constrained Spend in Search
  Advertising}. In \bibinfo{booktitle}{\emph{Proceedings of the Sixth ACM
  International Conference on Web Search and Data Mining}} (Rome, Italy)
  \emph{(\bibinfo{series}{WSDM '13})}. \bibinfo{publisher}{ACM},
  \bibinfo{address}{New York, NY, USA}, \bibinfo{pages}{697--706}.
\newblock
\showISBNx{978-1-4503-1869-3}
\urldef\tempurl%
\url{https://doi.org/10.1145/2433396.2433483}
\showDOI{\tempurl}


\bibitem[Kumar et~al\mbox{.}(2022)]%
        {kumar2022optimal}
\bibfield{author}{\bibinfo{person}{Bhuvesh Kumar}, \bibinfo{person}{Jamie
  Morgenstern}, {and} \bibinfo{person}{Okke Schrijvers}.}
  \bibinfo{year}{2022}\natexlab{}.
\newblock \showarticletitle{Optimal Spend Rate Estimation and Pacing for Ad
  Campaigns with Budgets}.
\newblock \bibinfo{journal}{\emph{arXiv preprint arXiv:2202.05881}}
  (\bibinfo{year}{2022}).
\newblock


\bibitem[Laffont and Robert(1996)]%
        {laffont1996optimal}
\bibfield{author}{\bibinfo{person}{Jean-Jacques Laffont} {and}
  \bibinfo{person}{Jacques Robert}.} \bibinfo{year}{1996}\natexlab{}.
\newblock \showarticletitle{Optimal auction with financially constrained
  buyers}.
\newblock \bibinfo{journal}{\emph{Economics Letters}} \bibinfo{volume}{52},
  \bibinfo{number}{2} (\bibinfo{year}{1996}), \bibinfo{pages}{181--186}.
\newblock


\bibitem[Liaw et~al\mbox{.}(2022)]%
        {liaw2022efficiency}
\bibfield{author}{\bibinfo{person}{Christopher Liaw}, \bibinfo{person}{Aranyak
  Mehta}, {and} \bibinfo{person}{Andres Perlroth}.}
  \bibinfo{year}{2022}\natexlab{}.
\newblock \showarticletitle{Efficiency of non-truthful auctions under
  auto-bidding}.
\newblock \bibinfo{journal}{\emph{arXiv preprint arXiv:2207.03630}}
  (\bibinfo{year}{2022}).
\newblock


\bibitem[Massart(1990)]%
        {massart1990tight}
\bibfield{author}{\bibinfo{person}{Pascal Massart}.}
  \bibinfo{year}{1990}\natexlab{}.
\newblock \showarticletitle{The tight constant in the
  Dvoretzky-Kiefer-Wolfowitz inequality}.
\newblock \bibinfo{journal}{\emph{The annals of Probability}}
  (\bibinfo{year}{1990}), \bibinfo{pages}{1269--1283}.
\newblock


\bibitem[Mehta et~al\mbox{.}(2007)]%
        {mehta2007adwords}
\bibfield{author}{\bibinfo{person}{Aranyak Mehta}, \bibinfo{person}{Amin
  Saberi}, \bibinfo{person}{Umesh Vazirani}, {and} \bibinfo{person}{Vijay
  Vazirani}.} \bibinfo{year}{2007}\natexlab{}.
\newblock \showarticletitle{Adwords and generalized online matching}.
\newblock \bibinfo{journal}{\emph{J. ACM}} \bibinfo{volume}{54},
  \bibinfo{number}{5} (\bibinfo{year}{2007}).
\newblock


\bibitem[Mohri and Medina(2016)]%
        {mohri2016learning}
\bibfield{author}{\bibinfo{person}{Mehryar Mohri} {and}
  \bibinfo{person}{Andr{\'e}s~Munoz Medina}.} \bibinfo{year}{2016}\natexlab{}.
\newblock \showarticletitle{Learning algorithms for second-price auctions with
  reserve}.
\newblock \bibinfo{journal}{\emph{The Journal of Machine Learning Research}}
  \bibinfo{volume}{17}, \bibinfo{number}{1} (\bibinfo{year}{2016}),
  \bibinfo{pages}{2632--2656}.
\newblock


\bibitem[Morgenstern and Roughgarden(2016)]%
        {MorgensternR16}
\bibfield{author}{\bibinfo{person}{Jamie Morgenstern} {and}
  \bibinfo{person}{Tim Roughgarden}.} \bibinfo{year}{2016}\natexlab{}.
\newblock \showarticletitle{Learning Simple Auctions}. In
  \bibinfo{booktitle}{\emph{29th Annual Conference on Learning Theory}}
  \emph{(\bibinfo{series}{Proceedings of Machine Learning Research},
  Vol.~\bibinfo{volume}{49})}, \bibfield{editor}{\bibinfo{person}{Vitaly
  Feldman}, \bibinfo{person}{Alexander Rakhlin}, {and} \bibinfo{person}{Ohad
  Shamir}} (Eds.). \bibinfo{publisher}{PMLR}, \bibinfo{address}{Columbia
  University, New York, New York, USA}, \bibinfo{pages}{1298--1318}.
\newblock


\bibitem[Pai and Vohra(2014)]%
        {pai2014optimal}
\bibfield{author}{\bibinfo{person}{Mallesh~M Pai} {and} \bibinfo{person}{Rakesh
  Vohra}.} \bibinfo{year}{2014}\natexlab{}.
\newblock \showarticletitle{Optimal auctions with financially constrained
  buyers}.
\newblock \bibinfo{journal}{\emph{Journal of Economic Theory}}
  \bibinfo{volume}{150} (\bibinfo{year}{2014}), \bibinfo{pages}{383--425}.
\newblock


\bibitem[Roughgarden and Schrijvers(2016)]%
        {roughgarden2016ironing}
\bibfield{author}{\bibinfo{person}{Tim Roughgarden} {and} \bibinfo{person}{Okke
  Schrijvers}.} \bibinfo{year}{2016}\natexlab{}.
\newblock \showarticletitle{Ironing in the dark}. In
  \bibinfo{booktitle}{\emph{Proceedings of the 2016 ACM Conference on Economics
  and Computation}}. \bibinfo{pages}{1--18}.
\newblock


\bibitem[Rusmevichientong and Williamson(2006)]%
        {rusmevichientong2006}
\bibfield{author}{\bibinfo{person}{Paat Rusmevichientong} {and}
  \bibinfo{person}{David~P. Williamson}.} \bibinfo{year}{2006}\natexlab{}.
\newblock \showarticletitle{An Adaptive Algorithm for Selecting Profitable
  Keywords for Search-Based Advertising Services}. In
  \bibinfo{booktitle}{\emph{Proceedings of the 7th ACM Conference on Electronic
  Commerce}} (Ann Arbor, Michigan, USA) \emph{(\bibinfo{series}{EC '06})}.
  \bibinfo{publisher}{Association for Computing Machinery},
  \bibinfo{address}{New York, NY, USA}, \bibinfo{pages}{260--269}.
\newblock
\urldef\tempurl%
\url{https://doi.org/10.1145/1134707.1134736}
\showURL{%
\tempurl}


\bibitem[Statista(2021)]%
        {statista2021}
\bibfield{author}{\bibinfo{person}{Statista}.} \bibinfo{year}{2021}\natexlab{}.
\newblock \bibinfo{title}{Online advertising revenue in the United States from
  2000 to 2021}.
\newblock
\newblock
\newblock
\shownote{\url{https://www.statista.com/statistics/183816/us-online-advertising-revenue-since-2000/}}.


\bibitem[Str{\"o}mberg(2009)]%
        {stromberg2009note}
\bibfield{author}{\bibinfo{person}{Thomas Str{\"o}mberg}.}
  \bibinfo{year}{2009}\natexlab{}.
\newblock \showarticletitle{A note on the differentiability of conjugate
  functions}.
\newblock \bibinfo{journal}{\emph{Archiv der Mathematik}}  \bibinfo{volume}{93}
  (\bibinfo{year}{2009}), \bibinfo{pages}{481--485}.
\newblock


\bibitem[Tran-Thanh et~al\mbox{.}(2014)]%
        {tran2014efficient}
\bibfield{author}{\bibinfo{person}{Long Tran-Thanh}, \bibinfo{person}{Lampros
  Stavrogiannis}, \bibinfo{person}{Victor Naroditskiy},
  \bibinfo{person}{Valentin Robu}, \bibinfo{person}{Nicholas~R Jennings}, {and}
  \bibinfo{person}{Peter Key}.} \bibinfo{year}{2014}\natexlab{}.
\newblock \showarticletitle{Efficient regret bounds for online bid optimisation
  in budget-limited sponsored search auctions}. In
  \bibinfo{booktitle}{\emph{uai2014, 30th Conf. on Uncertainty in AI}}.
\newblock


\end{thebibliography}

\newpage
\onecolumn 
\appendix
\section{Appendix}
\label{sec:appendix}

\subsection{Proof of Theorem~\ref{thm:offline}}
\label{sec:offlineproof}

\offlinethm*

\begin{proof}
Recall that $q(\mu,\mathbf{V},{\Fcal}, \rho)$ is the dual of the optimization problem in Equation~\eqref{eq:offlineoptproblem}, as simplified in Equation~\eqref{eq:substituteddualmany}: 
\[q(\mu,\mathbf{V},{\Fcal}, \rho) = {\mu}\rho + \E\left[\sum_{j=1}^J {v_j}\expalloc\left(\sigma_j\left(\frac{v_j}{1+\mu}\right)\right)-(1+\mu)\exppay\left(\sigma_j\left(\frac{v_j}{1+\mu}\right)\right)\right].\] See that $q(\mu,\mathbf{V},{\Fcal}, \rho)$ is convex as a function of ${\mu}$ since the objective function of a dual of a maximization problem is always convex. Moreover, when ${\mu}>\frac{JU}{\rho}$ , it follows that
\begin{equation*}
q(\mu,\mathbf{V},{\Fcal}, \rho) \geq {\mu}\rho > JU \geq q(0,\mathbf{V},{\Fcal}, \rho),
\end{equation*}
where the first inequality holds because we assume that the expected utility of bidding any values less than or equal to the realized values is always non-negative, and the expected utility at the best response value must be at least greater than or equal to the expected utility at any $v'\leq v$, which is greater than or equal to $0$. The second inequality holds when $\mu > JU/\rho$, while the last inequality holds because the sum of the expected utility across all $J$ auctions is upper-bounded by the sum of values across auctions, which is itself upper-bounded by $JU$. As a result, the minimizer of $q(\mu,\mathbf{V},{\Fcal}, \rho)$ lies in the interval $[0,JU/\rho]$. Additionally, the dual is differentiable with respect to ${\mu}$, as $\expalloc$, $\exppay$ and $\sigma_j$ are continuously differentiable for each auction $j$.

Since $\mu^*$ is the minimizer of $q(\mu,\mathbf{V},{\Fcal}, \rho)$, we either have $\mu^* = 0$ or we can apply the first order condition to the function $q(\mu,\mathbf{V},{\Fcal}, \rho)$. We first note that the derivative of such function is as follows. 
\begin{equation}
\label{eq:derivdualqgeneral}
\begin{aligned}
\frac{\partial}{\partial {\mu}}q(\mu,\mathbf{V},{\Fcal}, \rho) &= \rho + \E\Bigg[\sum_{j=1}^J v_j\frac{d\expalloc}{db}\left(\sigma_j\left(\frac{v_j}{1+\mu}\right)\right)\sigma_j'\left(\frac{v_j}{1+\mu}\right)\left(\frac{-v_j}{(1+\mu)^2}\right)\\
&\qquad-\exppay\left(\sigma_j\left(\frac{v_j}{1+\mu}\right)\right)-(1+\mu)\frac{d\exppay}{db}\left(\sigma_j\left(\frac{v_j}{1+\mu}\right)\right)\sigma_j'\left(\frac{v_j}{1+\mu}\right)\left(\frac{-v_j}{(1+\mu)^2}\right)\Bigg]\\
&= \rho - \E\Bigg[\sum_{j=1}^J \exppay\left(\sigma_j\left(\frac{v_j}{1+\mu}\right)\right)\Bigg].
\end{aligned}
\end{equation}
The last line follows from the substitution of $v = \frac{v_j}{1+\mu}$ in Equation~\eqref{eq:sigmaderiv} where we have
\begin{equation}
\begin{aligned}
v\frac{d\expalloc}{db}(\sigma_j(v)) - \frac{d\exppay}{db}(\sigma_j(v)) &= 0\\
\left(\frac{v_j}{1+\mu}\right)\frac{d\expalloc}{db}\left(\sigma_j\left( \frac{v_j}{1+\mu}\right)\right) - \frac{d\exppay}{db}\left(\sigma_j\left( \frac{v_j}{1+\mu}\right)\right) &= 0.
\end{aligned}
\end{equation}
Since the function $q(\mu, \Fcal,\mathbf{V},\rho)$ is continuous with respect to ${\mu}$ and attains its minimum value in the closed and bounded interval $[0, \frac{JU}{\rho}]$, Weierstrass Theorem guarantees the existence of a minimizer in this interval. To prove the theorem, it suffices to demonstrate the modified KKT conditions outlined in Theorem 5.1.5 in \cite{bertsekas1997nonlinear}, which are:
\begin{enumerate}
    \item Primal feasibility, where the budget constraint is satisfied: $\sum_{j=1}^J\E\left[\exppay\left(\sigma_j\left(\frac{v_j}{1+\mu^*}\right)\right)\right]\leq\rho.$ This follows from the first order condition applied to the function $q(\mu,\mathbf{V},{\Fcal}, \rho)$ and the minimizer $\mu^*$, where we have 
    \begin{equation*}
    \frac{\partial q(\mu^*,\Fcal,\mathbf{V},\rho)}{\partial {\mu}}\geq 0 \quad\qquad\Leftrightarrow\quad\qquad\sum_{j=1}^J\E\left[\exppay\left(\sigma_j\left(\frac{v_j}{1+\mu^*}\right)\right)\right]\leq\rho.
    \end{equation*}
    The first inequality is also true when $q^* = 0$, since we optimize in the space $[0, \infty)$.
    \item Dual feasibility, $\mu^*\geq 0$,
    which is true since $\mu^*\in[0, JU/\rho]$.
    \item Complementary slackness, i.e.
    \begin{equation*}
    \mu^*\cdot\left\{\rho - \sum_{j=1}^J\E\left[\exppay\left(\sigma_j\left(\frac{v_j}{1+\mu^*}\right)\right)\right]\right\} = 0.
    \end{equation*}
    This holds because the first order condition  applied to the function $q(\mu,\mathbf{V},{\Fcal}, \rho)$ and the minimizer $\mu^*$ says that
    \begin{equation*}
    \mu^*\cdot\frac{\partial q(\mu^*,\Fcal,\mathbf{V},\rho)}{\partial {\mu}} = 0\qquad\Leftrightarrow\qquad \mu^*\cdot\left\{\rho - \sum_{j=1}^J\E\left[\exppay\left(\sigma_j\left(\frac{v_j}{1+\mu^*}\right)\right)\right]\right\} = 0.
    \end{equation*}
\end{enumerate}
\end{proof}

\subsection{Proof of Proposition~\ref{prop:offlinetoonline}}
\label{sec:prop1proof}
\offlinetoonline*
\begin{proof}
We have
\begin{equation}
\begin{aligned}
\E[OPT(\bm{X}, \rho)] 
&\overset{(1)}{\leq} \E\min_{\mu\geq 0}\Bigg[\mu\rho T + \underset{\substack{\left\{b_{j,t}(\cdot):\mathcal{V}\rightarrow\mathbb{R}_{\geq 0}\right\}\\t\in[T],j\in[J]}}{\max}\quad \sum_{t=1}^T\sum_{j=1}^J v_{j,t}\expalloc(b_{j,t}(v_{j,t})) - (1+\mu)\exppay(b_{j,t}(v_{j,t}))\Bigg]\\
&\leq \min_{\mu\geq 0} \E\Bigg[\mu\rho T+\underset{\substack{\left\{b_{j,t}(\cdot):\mathcal{V}\rightarrow\mathbb{R}_{\geq 0}\right\}\\t\in[T],j\in[J]}}{\max}\quad \sum_{t=1}^T\sum_{j=1}^J v_{j,t}\expalloc(b_{j,t}(v_{j,t})) - (1+\mu)\exppay(b_{j,t}(v_{j,t}))\Bigg]\\
&= \min_{\mu\geq 0} T \E\Bigg[\mu\rho + \underset{\substack{\left\{b_{j}(\cdot):\mathcal{V}\rightarrow\mathbb{R}_{\geq 0}\right\}\\j\in[J]}}{\max}\quad \sum_{j=1}^J v_{j}\expalloc(b_{j}(v_{j})) - (1+\mu)\exppay(b_{j}(v_{j}))\Bigg]\\
&\overset{(2)}{=} T\cdot Z(\Fcal, \mathbf{V}, \rho),
\end{aligned}
\end{equation}
where inequality~$(1)$ follows from weak duality, and equation~$(2)$ holds because of the strong duality implied by Theorem~\ref{thm:offline}.
\end{proof}

\subsection{Sufficient Conditions for Theorem~\ref{thm:offline} in GFP Auctions} 
\label{sec:sufficient}
We now look at one of the commonly used auctions, generalized first price (GFP) auction to investigate the properties that imply the expected allocation, expenditure and best response functions to be continuously differentiable almost everywhere. 

We consider a GFP auction in a position auction with $(n+1)$ bidders and $k$ positions. The discount factors are $\alpha_1 > \alpha_2 > \ldots > \alpha_k > 0$. In a GFP auction, the bidder with the $i^{\text{th}}$ highest bid obtains the $i^{\text{th}}$ position with CTR $\alpha_i$ and pays her bid, for $i = 1,\ldots,k$; if $i>k$, she gets and pays nothing. The bids of bidders $1,2,\ldots,n$ are assumed to be i.i.d., with their bids drawn from a continuous distribution $F$. The cumulative distribution function of $F$ is defined as $\mathbb{P}(\theta_i \leq b) = F(b)$ for all $i \in {1, 2, \ldots, n}$, where $\theta_1, \ldots, \theta_n$ are random variables representing the other bids. We assume that the distribution $F$ is non-negative.

The expected allocation function for the last bidder, when bidding $b$, is
\begin{equation*}
\begin{aligned}
    \bar{x}(b) &= \E_F[x_{n+1}(\bbf_{-(n+1)},b)] \\
    &= \alpha_1\binom{n}{n}[F(b)]^n + \alpha_2\binom{n}{n-1}[F(b)]^{n-1}[1-F(b)]^{1} + \ldots +\\ &\qquad\alpha_{k}\binom{n}{n-(n-1)}[F(b)]^{n-({k}-1)}[1-F(b)]^{{k}
    -1}\\
    &= \sum_{i=1}^{k}\alpha_i\binom{n}{n-i+1}[F(b)]^{n-i+1}[1-F(b)]^{i-1},
\end{aligned}
\end{equation*}
where the second equality follows from the fact that the probability that a bidder with bid $b$ gets the $i^{\text{th}}$ position equals to the probability that exactly $(n-(i-1))$ other bidders bid lower than $b$ multiplied by the probability that the rest $(i-1)$ other bidders bid higher than $b$ multiplied by the number of ways of choosing such $(n-i+1,i-1)$ bidder partitions, which is $\binom{n}{n-i+1}[F(b)]^{n-i+1}[1-F(b)]^{i-1}$. This function is continuously differentiable almost everywhere when $F$ is. The expected expenditure function for the last bidder is $\bar{p}(b) = \E_F[p_{n+1}(\bbf_{-(n+1)},b)] = \E_F[bx_{n+1}(\bbf_{-(n+1)},b)] = b\bar{x}(b)$, which is also continuously differentiable almost everywhere when $F$ is. It is possible to relax the assumption that all bids are i.i.d. When the bid of each bidder comes from a continuously differentiable distribution, both the expected allocation and expenditure functions are also continuously differentiable.

Furthermore, we claim the following lemma.
\begin{lemma}
In a GFP auction, when 
\begin{enumerate}
    \item the distribution of other bids $F$ is strictly monotone and continuously differentiable,
    \item the expected allocation function $\bar{x}:\mathbb{R}\rightarrow[0,1]$ is strictly increasing, continuously differentiable and concave,
\end{enumerate}
the best response function is unique and almost everywhere continuously differentiable.
\end{lemma}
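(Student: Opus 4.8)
The plan is to exploit the special structure of the GFP payment rule, namely $\bar p(b) = b\,\bar x(b)$, so that the last bidder's expected utility at value $v$ and bid $b$ collapses to
\[
g(b) := v\,\bar x(b) - \bar p(b) = (v-b)\,\bar x(b).
\]
Since $\bar x \ge 0$, this product is nonpositive for $b \ge v$ and nonnegative for $b \le v$; together with the paper's convention that bidding $b \le 0$ yields zero utility, any maximizer over $b \ge 0$ lies in the compact interval $[0,v]$, and existence follows from continuity of $g$ there. The entire argument thus reduces to analyzing $g$ on $[0,v]$.

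First I would prove uniqueness via strict concavity, and the key observation is that this requires only the first-order hypotheses on $\bar x$, not a second derivative. Differentiating gives $g'(b) = (v-b)\,\bar x'(b) - \bar x(b)$. On $[0,v]$ the factor $(v-b)$ is nonnegative and nonincreasing, while hypothesis (2) makes $\bar x'$ nonnegative (strict monotonicity) and nonincreasing (concavity); hence the product $(v-b)\,\bar x'(b)$ is nonincreasing, whereas $-\bar x(b)$ is strictly decreasing. Therefore $g'$ is strictly decreasing on $[0,v]$, i.e. $g$ is strictly concave there, which yields a \emph{unique} maximizer $\sigma(v)$, placed in $(0,v)$ by the positivity convention, and settles the first claim.

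Next I would identify $\sigma(v)$ through the first-order condition $g'(b)=0$, equivalently $v = \phi(b)$ with $\phi(b) := b + \bar x(b)/\bar x'(b)$, which is well defined since $\bar x' > 0$ and $\bar x \ge 0$. The function $\phi$ is continuous, and a short computation shows that wherever $\bar x''$ exists, $\phi'(b) = 2 - \bar x(b)\bar x''(b)/\bar x'(b)^2 \ge 2 > 0$ by concavity; more robustly, $g$ has nonnegative cross-partial $\partial^2 g/\partial b\,\partial v = \bar x'(b) \ge 0$, so by monotone comparative statics the unique maximizer $\sigma$ is nondecreasing in $v$. Either route makes $\sigma = \phi^{-1}$ a well-defined, monotone, continuous function.

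Finally, for regularity I would combine two facts. Monotonicity of $\sigma$ already forces differentiability almost everywhere by Lebesgue's theorem. To upgrade this to \emph{continuous} differentiability almost everywhere, I would apply the implicit function theorem to $H(v,b) := (v-b)\bar x'(b) - \bar x(b) = 0$: at the maximizer one has $\partial_b H = g''(\sigma(v)) < 0$ (strict concavity) and $\partial_v H = \bar x'(\sigma(v)) > 0$, so wherever $\bar x$ is twice continuously differentiable, $\sigma$ is $C^1$ with $\sigma'(v) = \bar x'(\sigma(v)) / (2\bar x'(\sigma(v)) - (v-\sigma(v))\bar x''(\sigma(v)))$. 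Because a concave $C^1$ function has a monotone derivative $\bar x'$ that is itself differentiable almost everywhere, this covers a set of full measure. I expect the main obstacle to be precisely this last step: carefully handling the measure-zero set of kinks of $\bar x$ where $\bar x''$ fails to exist or to be continuous, and arguing that its image under the monotone map $\sigma$ stays null so that the ``almost everywhere continuously differentiable'' conclusion is clean. Under the stronger but typical assumption that $F$, and hence $\bar x$, is $C^2$, this subtlety disappears and $\sigma$ is genuinely $C^1$.
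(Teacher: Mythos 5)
Your uniqueness argument is correct and genuinely different from --- indeed more elementary than --- the paper's: you obtain strict concavity of $g(b)=(v-b)\bar{x}(b)$ directly in bid space from the product structure (the nonnegative, nonincreasing term $(v-b)\bar{x}'(b)$ plus the strictly decreasing term $-\bar{x}(b)$), whereas the paper changes variables to $c=\bar{x}(b)$ and proves that $c\mapsto c\,\bar{x}^{-1}(c)$ is $2/P$-strongly convex, deducing uniqueness of the optimal CTR and hence of the bid. Through the Topkis/Berge step (a monotone, continuous $\sigma$) your proof is sound, and this part would be a legitimate simplification of the paper's argument.

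The gap is in the final regularity step, and under the stated hypotheses it is not a removable technicality. First, the lemma assumes only that $\bar{x}$ is $C^1$ and concave, so $\bar{x}'$ is monotone and therefore twice differentiable only pointwise almost everywhere; the implicit function theorem applied to $H(v,b)=(v-b)\bar{x}'(b)-\bar{x}(b)$ needs much more, namely that $\bar{x}''$ exist and be continuous on a neighborhood of $\sigma(v)$. Pointwise a.e.\ second differentiability does not give local $C^2$ off a null set: taking $\bar{x}'(b)$ to be a positive constant minus $\int_0^b\mathbf{1}_C$, with $C$ a fat Cantor set, yields a $C^1$, concave, strictly increasing $\bar{x}$ for which $\bar{x}''$ exists a.e.\ but is discontinuous at every point of the positive-measure set $C$, so $\bar{x}$ is locally $C^2$ at no point of $C$. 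Second, even where the IFT does apply, the exceptional set of \emph{values} is $\sigma^{-1}(N)=\phi(N)$, the image of the bad set $N$ of bids under the increasing continuous map $\phi=\sigma^{-1}$, and monotone continuous maps need not carry null sets to null sets (the Cantor function maps the null Cantor set onto $[0,1]$); you would need $\phi$ to satisfy Luzin's property (N), e.g.\ to be Lipschitz, which you have not established. This is exactly the hole the paper's conjugate-duality route avoids: after the change of variables, the optimizer is $(g^*)'$, the derivative of the convex conjugate of a strongly convex function, hence Lipschitz in $v$, and the a.e.\ regularity of the best response follows from that Lipschitz estimate with no second-derivative information about $\bar{x}$ at all. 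Without such an estimate (or your fallback assumption that $F$, hence $\bar{x}$, is $C^2$, which is strictly stronger than the lemma's hypotheses), the ``almost everywhere continuously differentiable'' half of the lemma remains unproven in your write-up.
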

\begin{proof}
We want to show that $\sigma_j(v)$ is the unique maximizer of $v\bar{x}(b)-\bar{p}(b)$ over $b\in\mathbb{R}$ and it is almost everywhere continuously differentiable. To do so, we first use the property of the GFP auctions and rewrite the maximization problem as
\begin{equation}
\label{eq:equivGFP}
\max_{b\in\mathbb{R}}v\bar{x}(b)-\bar{p}(b) = \max_{b\in\mathbb{R}}v\bar{x}(b)-\bar{x}(b)b = \max_{c\in[0,1]}vc - c\bar{x}^{-1}(c),
\end{equation}
where the last equality follows from substituting $c = \bar{x}(b)$ and $b = \bar{x}^{-1}(c)$. We then show that the function $vc - c\bar{x}^{-1}(c)$ is a strongly concave function of $c$ that has a unique maximum by proving that the function $g(c) = c\bar{x}^{-1}(c)$ is strongly convex in $[0,1]$. We also show that the function $\max_{c\in[0,1]}vc - c\bar{x}^{-1}(c)$ is equivalent to the convex conjugate of $g$ ($g^*(c)$), which later implies $\sigma_j(v) = \arg\max_{c\in[0,1]}vc - c\bar{x}^{-1}(c) = \arg\max_{c\in[0,1]}g^*(c)$ to be unique and continuously differentiable.

First, we prove that when $\bar{x}$ is strictly monotone, continuously differentiable, and concave, and its derivative is upper-bounded by $P$, the function $g(c) = c\bar{x}^{-1}(c)$ is $2/P$-strongly convex in $[0,1]$, where $\bar{x}^{-1}:[0,1]\rightarrow\mathbb{R}$ is the inverse function of $\bar{x}$ (the existence is implied since $\bar{x}$ is strictly monotone). See that for any $(c,y)$, we have
\begin{equation*}
    \bar{x}^{-1}(c)\overset{}{\geq} \bar{x}^{-1}(y) + \frac{d}{dy}\bar{x}^{-1}(y)(c-y) \overset{}{=} \bar{x}^{-1}(y) +\frac{1}{\frac{d\bar{x}(b)}{db}\big|_{b=y}}(c-y),
\end{equation*}
where the inequality follows from the concavity of $\bar{x}$, which implies the convexity of $\bar{x}^{-1}$, and the equality follows from the derivative of the inverse function. We will use this inequality to show that the function $g(c) = c\bar{x}^{-1}(c)$ is $2/P$- strongly convex, as follows.
\begin{equation*}
\begin{aligned}
    c\bar{x}^{-1}(c) &\geq c\bar{x}^{-1}(y) + \frac{c}{\frac{d\bar{x}(b)}{db}\big|_{b=y}}(c-y)\\
    \Leftrightarrow c\bar{x}^{-1}(c) &\geq (y + c-y)\bar{x}^{-1}(y) + \frac{y+c-y}{\frac{d\bar{x}(b)}{db}\big|_{b=y}}(c-y)\\
    \Leftrightarrow c\bar{x}^{-1}(c) &\geq y\bar{x}^{-1}(y) + \left(\bar{x}^{-1}(y) + \frac{y}{\frac{d\bar{x}(b)}{db}\big|_{b=y}}\right)(c-y) + \frac{(c-y)^2}{\frac{d\bar{x}(b)}{db}\big|_{b=y}}\\
    \Leftrightarrow c\bar{x}^{-1}(c)&\overset{(1)}{\geq} y\bar{x}^{-1}(y) + \left(\bar{x}^{-1}(y) + \frac{y}{\frac{d\bar{x}(b)}{db}\big|_{b=y}}\right)(c-y) + \frac{1}{P}(c-y)^2\\
    \Leftrightarrow g(c) &\geq g(y) + g'(c)(c-y) + \frac{1}{P}(c-y)^2\qquad\Leftrightarrow\qquad\text{$g(c)$ is $2/P$-strongly convex.}
\end{aligned}
\end{equation*}
Here, Inequality~(1) holds because the derivative of $\bar{x}$, $\left|\frac{d\bar{x}}{db}(y)\right|$, is upper-bounded by $P$.

Then, we show that $\sigma_j(v)$ is the unique maximizer of $v\bar{x}(b)-\bar{p}(b)$ over $b\in\mathbb{R}$ and $\sigma_j(v)$ is almost everywhere continuously differentiable. See that when the last bidder's value is $v$, $v\bar{x}(b)-\bar{p}(b)\leq0$ when $b= 0$\footnote{Note that we assume all bids are non-negative, and bidders with zero bid are not allocated since $F(0) = 0$.} or $b\geq v$, and $v\bar{x}(b)-\bar{p}(b)\leq v\leq U$, so $v\bar{x}(b)-\bar{p}(b)$ attains its maximum when $b$ is in the compact space $[0,v]$. Moreover, $\arg\max_{b\in\mathbb{R}}v\bar{x}(b)-\bar{p}(b) = \arg\max_{b\in\mathbb{R}}v\bar{x}(b)-\bar{x}(b)b = \arg\max_{c\in[0,1]}vc - c\bar{x}^{-1}(c)$, as shown in Equation~\eqref{eq:equivGFP}. Since $g(c) = c\bar{x}^{-1}(c)$ is strongly convex, the function $vc - c\bar{x}^{-1}(c)$ is strongly concave (as a function of $c$). It is also bounded above by $U$. Since a bounded strongly concave function has a unique maximum in a compact space, the function $vc - c\bar{x}^{-1}(c)$ attains its unique maximum for some $c^*\in[0,1]$.

Furthermore, if $g^*$ is the conjugate of $g$, we have $g^*(b) = \max_{c\in[0,1]}vc-g(c) = \max_{b\in\mathbb{R}}v\bar{x}(b)-\bar{p}(b) $. Since $g$ is $2/P$-strongly convex, $g^*$ is differentiable everywhere and $g^{*'}(v) = \arg\max_{c\in[0,1]}vc-g(c) = \sigma_j(v)$ (\cite{stromberg2009note}). Further, $g^{*'}$ is Lipschitz continuous as a convex conjugate of a strongly convex function, and a Lipschitz continuous function must be continuously differentiable almost everywhere, hence $\sigma_j(v)$ is continuously differentiable almost everywhere.
\end{proof}

\subsection{Proof of Theorem~\ref{thm:online}}
\label{sec:proof-online-thm}

\onlinethm*
\begin{proof}
There are 7 steps of the proof, as outlined in Section~\ref{sec:online}. Suppose that, for all $t\in[T]$ and $j\in[J]$, there exists $\delta_{j,t}>0$ such that $\sup_{v\in\mathbb{R}_{\geq 0}}|\sigma_j(x) - \hat{\sigma}_{j,t}(x)|\leq\delta_{j,t}$. We specify the conditions for such $\{\delta_{j,t}\}_{j\in[J],t\in[T]}$ in Lemma \ref{lemma:BR-estimate}.

\textbf{Step 1: upper-bound the performance in hindsight.} Looking at the in-hindsight benchmark in Equation~\eqref{eq:generalonlinebenchmark} and its dual, by applying weak duality,
as implied by Proposition~\ref{prop:offlinetoonline}, we have
\begin{equation*}
OPT(\bm{X},\rho) \leq T\cdot Z(\Fcal, \mathbf{V},\rho) \leq T\cdot\inf_{\mu\geq 0}{\Gamma}(\mu),
\end{equation*}
where
\begin{equation*}
\begin{aligned}
{\Gamma}(\mu) &= \mu\rho + \sum_{j=1}^J\E_{V_j}\left[v_j\expalloc\left(\sigma_j\left(\frac{v_j}{1+\mu}\right)\right) - (1+\mu)\exppay\left(\sigma_j\left(\frac{v_j}{1+\mu}\right)\right)\right].
\end{aligned}
\end{equation*}

The following characteristics of the dual function ${\Gamma}(\mu)$ will be used later:
\begin{enumerate}
\item From the analysis for the static case, we know that the minimizer of $q(\mu,\mathbf{V},\Fcal,\rho)$  lies in $[0,JU/\rho]$, which is a compact space. Hence, $\mu^*$, the minimizer of ${\Gamma}(\mu)$ is also attained in $[0,JU/\rho]$ and the minimum value is ${\Gamma}(\mu^*)$.
\item For a fixed $\mu$, let $U(\mu)$ and $G(\mu)$ be the expected utility and expenditure when the optimal bids are used (based on the offline results) and the multiplier is $\mu$, i.e., $U(\mu) = \sum_{j=1}^J\mathbb{E}_{V_j}\left[v_j\expalloc\left(\sigma_j\left(\frac{v_j}{1+\mu}\right)\right) - \exppay\left(\sigma_j\left(\frac{v_j}{1+\mu}\right)\right)\right]$ and $G(\mu) = \sum_{j=1}^J\mathbb{E}_{V_j}\left[\exppay\left(\sigma_j\left(\frac{v_j}{1+\mu}\right)\right)\right]$. We have ${\Gamma}(\mu) = U(\mu) +\mu(\rho - G(\mu))$.
\item Since $\Gamma(\mu)$ is thrice differentiable, by dominated convergence theorem (DCT), the derivative of the expectation is equal to the expectation of the derivative, hence we get the following derivative: \begin{equation*}
{\Gamma}'(\mu) = \rho - \sum_{j=1}^J\mathbb{E}_{V_j}\left[\exppay\left(\sigma_j\left(\frac{v_j}{1+\mu}\right)\right)\right].
\end{equation*}
Moreover, we have ${\Gamma}'(\mu) = \rho - G(\mu).$
\item Since the dual function $\Gamma(\mu)$ is thrice differentiable, the KKT condition implies that we have $\mu^*{\Gamma}'(\mu^*) = \mu^*(\rho - G(\mu^*)) = 0$, $\mu^*\geq 0$, and ${\Gamma}'(\mu^*) = \rho - G(\mu^*)\geq 0$ when $\mu^*\in\arg\min_{\mu\geq 0}\Gamma(\mu) = \arg\min_{\mu\geq 0}{\Gamma}(\mu)$.
\end{enumerate}

\textbf{Step 2: bound the difference in the expected allocation and expenditure as functions of $\delta_{j,t}$.} Let $\hat{G}_t$ be the expenditure at round $t$ when the estimated best response functions are used, i.e.
\begin{equation*}
\hat{G}_t(\mu) = \sum_{j=1}^J\mathbb{E}_{V_j}\left[\exppay\left(\hatsigma_{j,t}\left(\frac{v_j}{1+\mu}\right)\right)\right].
\end{equation*}
Since $|\sigma_j(x) - \hatsigma_{j,t}(x)|\leq\delta_{j,t}$, we have
\begin{equation}
\label{eq:expenditurebound}
\begin{aligned}
\Big|\hat{G}_t(\mu) - G(\mu)\Big| &\leq \sum_{j=1}^J\Bigg|\mathbb{E}_{V_j}\left[\exppay\left(\hatsigma_{j,t}\left(\frac{v_j}{1+\mu}\right)\right) - \exppay\left(\sigma_j\left(\frac{v_j}{1+\mu}\right)\right)\right]\Bigg|\\
&\leq L\sum_{j=1}^J\delta_{j,t} = \gamma_t.
\end{aligned}
\end{equation}
The first inequality follows from triangle inequality, while the second inequality holds because $\exppay$ is $L$-Lipschitz.

Let $\hat{U}_t$ be the expected utility at round $t$ when the estimated best responses are used, i.e.
\begin{equation*}
\hat{U}_t(\mu) = \sum_{j=1}^J\mathbb{E}_{V_j}\left[v_{j,t}\expalloc\left(\hatsigma_{j,t}\left(\frac{v_j}{1+\mu}\right)\right) - \exppay\left(\hatsigma_{j,t}\left(\frac{v_j}{1+\mu}\right)\right)\right],
\end{equation*}
and $U$ be the expected utility when the actual best responses are known, where the expectations here are taken with respect to the distribution of other bids and the distribution of the last bidder's values. Then, we have
\begin{equation}
\label{eq:utilitybound}
\begin{aligned}
\Big|\hat{U}_t(\mu) - U(\mu)\Big| &\leq \sum_{j=1}^J\Bigg|\mathbb{E}_{V_j}\Bigg[v_{j,t}\expalloc\left(\hatsigma_{j,t}\left(\frac{v_{j,t}}{1+\mu}\right)\right) - \exppay\left(\hatsigma_{j,t}\left(\frac{v_{j,t}}{1+\mu}\right)\right) \\
&\qquad\quad- \left(v_{j,t}\expalloc\left(\sigma_j\left(\frac{v_{j,t}}{1+\mu}\right)\right) - \exppay\left(\sigma_j\left(\frac{v_{j,t}}{1+\mu}\right)\right)\right)\Bigg]\Bigg|\\
&\leq \sum_{j=1}^J (UL+L)\delta_{j,t} := \Delta_t,
\end{aligned}
\end{equation}
where the first inequality follows from triangle inequality, and the second inequality holds because we assume that both $\bar{x}_j$ and $\bar{p}_j$ are $L$-Lipschitz.

\textbf{Step 3: lower-bounding the regret of policy AVP.} We consider an alternate framework where we are allowed to bid after budget depletion. From the algorithm, we have $\tilde{B}_t = \rho T - \sum_{s=1}^{t-1}\sum_{j\in[J]}\hat{p}_{j,t}$ as the remaining budget at the beginning of period $t$ in the alternate framework. Let $\tau = \inf\{t\geq 1:\tilde{B}_{t+1}\leq JU\}$ be the last period where the remaining budget is larger than $JU$. Since $v_j/(1+\mu)\leq U$ for all auctions $j$ and $\mu\geq 0$, for any period $t\leq\tau$, the bid in this framework (and the original framework) is $\hatsigma_{j,t}\left(\frac{v_{j,t}}{1+\mu_t}\right)$ for auction $j$. The performance of the original and the alternate frameworks coincide up to time $\tau$, and the total utility of AVP is lower-bounded by
\begin{equation}
\label{eq:loweboundJU}
    \sum_{j=1}^J\E_{V_j}\left[\sum_{t=1}^{\tau\wedge T}v_{j,t}\hat{x}_{j,t} -\hat{p}_{j,t}\right]\geq \sum_{j=1}^J\E_{V_j}\left[\sum_{t=1}^Tv_{j,t}\hat{x}_{j,t} -\hat{p}_{j,t}\right] - JU\mathbb{E}\left[(T-\tau)^+\right],
\end{equation}
where the first term is a result of discarding the total utility after the budget is depleted and the inequality holds because $0\leq v_{j,t}\leq U$.

\textbf{Step 4: lower-bounding the budget depletion period $\tau$.} See that we have $\tau = \inf\{t\geq 1: \tilde{B}_{t+1}\leq JU\}$ and in the adaptive algorithm, $\mu_{t+1} = \mu_t +\epsilon(\sum_{j\in[J]}\hat{p}_{j,t}-\rho) - P_t$, where $P_t = \mu_t +\epsilon(\realpay -\rho) - P_{[0,\bar{\mu}]}\left(\mu_t +\epsilon(\realpay -\rho)\right)$ is the projection error. Reordering terms and summing up to period $\tau$, we have
\begin{equation*}
    \sum_{t=1}^\tau \left(\realpay-\rho\right) = \sum_{t=1}^\tau \frac{1}{\epsilon}(\mu_{t+1}-\mu_t) + \sum_{t=1}^\tau \frac{P_t}{\epsilon}.
\end{equation*}

For the left hand side, we have
\begin{equation*}
    \sum_{t=1}^\tau \left(\realpay-\rho\right) = \rho T - \tilde{B}_{\tau+1} - \rho\tau\geq\rho(T-\tau)-JU,
\end{equation*}
where the equality holds since $\tilde{B}_{\tau+1} = \rho T - \sum_{t=1}^\tau \realpay$ and the inequality holds because $\tilde{B}_{\tau+1}\leq JU$.

For the first term in the right hand side, we have
\begin{equation*}
\sum_{t=1}^\tau \frac{1}{\epsilon}(\mu_{t+1}-\mu_t) = \frac{\mu_{\tau+1}}{\epsilon} - \frac{\mu_1}{\epsilon} \leq \frac{\bar{\mu}}{\epsilon},
\end{equation*}
where the inequality holds because $\mu_t\in[0,\bar{\mu}]$.

For the projection error, we have
\begin{equation*}
\begin{aligned}
P_t &\leq P_t^+ = \left(\mu_t +\epsilon\left(\realpay-\rho\right) - P_{[0,\bar{\mu}]}\left(\mu_t +\epsilon\left(\realpay-\rho\right)\right)\right)^+\\
&= \left(\mu_t +\epsilon\left(\realpay-\rho\right)-\bar{\mu}\right)\mathbf{1}\left\{\mu_t + \epsilon\left(\realpay-\rho\right)>\bar{\mu}\right\}\\
&\leq \epsilon JU \mathbf{1}\left\{\mu_t + \epsilon\left(\realpay-\rho\right)>\bar{\mu}\right\},
\end{aligned}
\end{equation*}
where the second line holds since the projection error is only positive when $\mu_t + \epsilon(\realpay - \rho)>\bar{\mu}$ and the last line holds since $\mu_t\leq\bar{\mu}$ and $\realpay\leq JU$. We will then prove that since $\epsilon < 1/JU$, $P_t^+$ is always $0$. Consider the function $f(\mu) = \mu + \epsilon JU/(1+\mu)$. The function is non-decreasing when $\epsilon JU\leq 1.$ Then, we have
\begin{equation*}
\begin{aligned}
\mu_t + \epsilon\left(\realpay-\rho\right) &\leq \mu_t + \epsilon\left(\frac{JU}{1+\mu_t} - \rho\right) = f(\mu_t) -\epsilon\rho\\
&\leq f(\bar{\mu}) - \frac{\epsilon J U}{1+\bar{\mu}} = \bar{\mu},
\end{aligned}
\end{equation*}
where the first line holds because $\hat{p}_{j,t}\left(\frac{v_{j,t}}{1+\mu_t}\right)\leq \frac{v_{j,t}}{1+\mu_t}\leq \frac{U}{1+\mu_t}$, and the second line holds because f is non-decreasing, $\mu_t\leq\bar{\mu}$, and $\bar{\mu}\geq JU/\rho - 1$. So, there is no positive projection error when $\epsilon < 1/JU$. Putting things together, we have
\begin{equation}
\label{eq:timelowerbound}
T - \tau\leq \frac{\bar{\mu}}{\epsilon\rho} + \frac{JU}{\rho}.
\end{equation}

\textbf{Step 5: lower-bounding the utility per auction.} We will show that the utility per auction is close to ${\Gamma}(\mu^*)$. For a fixed $\mu_t$, the expected utility that the last bidder gets at round $t$ is
\begin{equation*}
\begin{aligned}
\hat{U}_t(\mu_t) &= \sum_{j=1}^J\E_{V_j}\left[v_{j,t}\expalloc\left(\hatsigma_{j,t}\left(\frac{v_{j,t}}{1+{\mu_t}}\right)\right) - \exppay\left(\hatsigma_{j,t}\left(\frac{v_{j,t}}{1+{\mu_t}}\right)\right)\right] \\
&\geq U(\mu_t) - \Delta_t = {\Gamma}(\mu_t) + \mu_t(G(\mu_t) - \rho) - \Delta_t,
\end{aligned}
\end{equation*}
where the inequality follows from step 2 and we recall that ${\Gamma}(\mu) = U(\mu) +\mu(\rho - G(\mu))$.

We want to do a second order expansion on $U(\mu)$ around $\mu^*$. First, we note that $U(\mu^*) = {\Gamma}(\mu^*)$ since $\mu^*(\rho - G(\mu^*))= 0$ from the last property on step 1. Then, the first order derivative of $U$ is $U'(\mu) = \Gamma'(\mu) - (\rho - G(\mu)) - \mu(-G'(\mu)) = \mu G'(\mu)$ since $\Gamma'(\mu) = \rho - G(\mu)$ from step 1. Furthermore, the second order derivative of $U$ is $U''(\mu) = \mu G''(\mu) + G'(\mu)$, where there exist $\bar{G}'>0$ and $\bar{G}''>0$ such that $|G'(\mu)|\leq \bar{G}'$ and $|G''(\mu)|\leq \bar{G}''$ since the dual function is thrice differentiable with bounded derivatives.

By taking a second order Taylor's expansion on $U(\mu_t)$ around $\mu^*$, for some $\xi$ in between $\mu_t$ and $\mu^*$, we have 
\begin{equation*}
    U(\mu_t) = U(\mu^*) + U'(\mu^*)(\mu_t-\mu^*) + \frac{U''(\mu)}{2}(\mu_t-\mu^*)^2,
\end{equation*}
and by substituting $U(\mu^*), U'(\mu^*)$ and $U''(\mu^*)$ and taking the expectation with respect to all the other bid distributions and the distributions of values, we get
\begin{equation*}
\begin{aligned}
U(\mu_t) &= {\Gamma}(\mu^*) + \mu^*G'(\mu^*)\mathbb{E}[\mu_t - \mu^*] + \mathbb{E}\left[\frac{\mu G''(\xi)+G'(\xi)}{2}(\mu_t-\mu^*)^2\right]\\
&\geq {\Gamma}(\mu^*) - \bar{G}'\mu^*|\mathbb{E}[\mu_t - \mu^*]| - \frac{\bar{\mu}\bar{G}''+\bar{G}'}{2}\mathbb{E}[(\mu_t-\mu^*)^2].
\end{aligned}
\end{equation*}

Thus, the total expected utility at round $t$ is lower-bounded by
\begin{equation}
\label{eq:lowerboundutilityroundt}
\hat{U}_t(\mu_t) \geq {\Gamma}(\mu^*) - \bar{G}'\mu^*|\mathbb{E}[\mu_t - \mu^*]| - \frac{\bar{\mu}\bar{G}''+\bar{G}'}{2}\mathbb{E}[(\mu_t-\mu^*)^2] - \Delta_t.
\end{equation}

\textbf{Step 6: upper-bounding the total expected errors.} We first upper-bound $\alpha_t = \mathbb{E}[(\mu_t-\mu^*)^2]$. By definition, we have
\begin{equation*}
\begin{aligned}
(\mu_{t+1}-\mu^*)^2 &= \left(P_{[0,\bar{\mu}]}\left(\mu_t-\epsilon\left(\rho-\realpay\right)\right)-\mu^*\right)^2\\
&\leq\left(\mu_t-\mu^* - \epsilon\left(\rho-\realpay\right)\right)^2 \\
&= (\mu_t-\mu)^2 - 2\epsilon(\mu_t-\mu^*)\left(\rho-\realpay\right) + \epsilon^2\left(\rho - \realpay\right)^2,
\end{aligned}
\end{equation*}
where the first inequality holds because $\mu^*\in[0,\bar{\mu}]$ when $\bar{\mu} = JU/\rho$. Dividing by $\epsilon$ and taking expectation w.r.t. all the other bid distributions and the distributions of values, we have
\begin{equation}
\label{eq:deltaeqhat}
\begin{aligned}
\frac{\mathbb{E}[(\mu_{t+1}-\mu^*)^2]}{\epsilon}&\leq\frac{\mathbb{E}[(\mu_t-\mu^*)^2]}{\epsilon} - 2\mathbb{E}[(\mu_t-\mu^*)(\rho-\realpay)] + \mathbb{E}[(\rho - \realpay)^2]\epsilon.
\end{aligned}
\end{equation}
For the second term, if $\phi_{j,t}$ denotes the payment that the last bidder gets in auction $j$ for bidding optimally on round $t$, i.e.
\begin{equation*}
    \phi_{j,t} = p_{j,n+1}\left(\bbf_{j,-(n+1)}, \sigma_j\left(\frac{v_{j,t}}{1+\mu_t}\right)\right),
\end{equation*}
we have
\begin{equation*}
\begin{aligned}
\mathbb{E}[(\mu_t-\mu^*)(\rho - \realpay)] &\overset{(1)}{=} \mathbb{E}[\mathbb{E}[(\mu_t-\mu^*)(\rho - \sum_{j\in[J]}\phi_{j,t} + \sum_{j\in[J]}\phi_{j,t} - \realpay)|\mu_t]] \\
&= \mathbb{E}[(\mu_t-\mu^*)(\rho - G(\mu_t) + G(\mu_t) - \hat{G}_t(\mu_t))]\\
&= \E[(\mu_t-\mu^*)(\rho - G(\mu^*) + G(\mu^*) - G(\mu_t))] + \E[(\mu_t-\mu^*)(G(\mu_t) - \hat{G}_t(\mu_t))]\\
&\overset{}{=} \mathbb{E}[(\mu_t-\mu^*)(G(\mu^*) - G(\mu_t))] + \mathbb{E}[\mu_t(\rho - G(\mu^*))] \\
&\qquad\qquad- \mathbb{E}[\mu^*(\rho - G(\mu^*))] + \E[(\mu_t-\mu^*)(G(\mu_t) - \hat{G}_t(\mu_t))]\\
&\overset{(2)}{\geq} \mathbb{E}[(\mu_t-\mu^*)(G(\mu^*) - G(\mu_t))] + \E[(\mu_t-\mu^*)(G(\mu_t) - \hat{G}_t(\mu_t))]\\
&\overset{(3)}{\geq} \mathbb{E}[\lambda(\mu_t-\mu^*)^2] + \mathbb{E}[(\mu_t-\mu^*)(G(\mu_t) - \hat{G}_t(\mu_t))]\\
&\overset{(4)}{\geq} \mathbb{E}[\lambda(\mu_t-\mu^*)^2] - \bar{\mu}\gamma_t.
\end{aligned}
\end{equation*}
Equation~$(1)$ holds because of the independence of $(V_1,\ldots,V_J)$ from $\mu_t$. Inequality~$(2)$ follows from $\mu_t\geq 0$, $\rho - G(\mu^*)\geq 0$, and $\mu^*(\rho-G(\mu^*)) = 0$. Inequality~$(3)$ holds because the dual function is $\lambda$ strongly convex and $G(\mu) = \rho - \Gamma(\mu)$, so we have
\begin{equation*}
\begin{aligned}
\lambda(\mu - \mu')^2 &\leq (\Gamma(\mu) - \Gamma(\mu'))(\mu-\mu')\\
&= (\rho - \Gamma(\mu') - \rho + \Gamma(\mu))(\mu-\mu')\\
&= (G(\mu') - G(\mu))(\mu-\mu').
\end{aligned}
\end{equation*}
Here, Inequality~$(4)$ follows from the expenditure bound in Inequality~\eqref{eq:expenditurebound} where $|\hat{G}_t(\mu) - G(\mu)|\leq \gamma_t$.

For the third term in Equation~\eqref{eq:deltaeqhat}, we have $\mathbb{E}[(\rho - \realpay)^2]\leq (JU)^2$ since $\rho\leq JU$, $\realpay\geq 0$, and the total payment $\realpay$ is at most the sum of the bids in all auctions, which is at most the sum of all values in all auctions, which is upper-bounded by $JU$. Substituting, we have
\begin{equation*}
    \frac{\alpha_{t+1}}{\epsilon} \leq \frac{\alpha_t}{\epsilon} - 2\lambda\epsilon\frac{\alpha_t}{\epsilon} + \bar{\mu}\gamma_t + \epsilon(JU)^2,
\end{equation*}
substituting recursively, we have
\begin{equation*}
\begin{aligned}
    \alpha_t &\leq \alpha_1(1-2\lambda\epsilon)^{t-1} + \frac{\epsilon J^2U^2}{2\lambda} + \bar{\mu}\sum_{k=1}^t(1-2\lambda\epsilon)^{t-k}\gamma_k\\
    &\leq \bar{\mu}^2(1-2\lambda\epsilon)^{t-1} + \frac{\epsilon J^2 U^2}{2\lambda} + \bar{\mu}\sum_{k=1}^t(1-2\lambda\epsilon)^{t-k}\gamma_k,
\end{aligned}
\end{equation*}
which further implies
\begin{equation}
\label{eq:upperbounddelta}
    \sum_{t=1}^T\alpha_t\leq\frac{J^2U^2\epsilon T}{2\lambda} + \bar{\mu}^2\cdot\frac{1}{2\lambda\epsilon} + \bar{\mu}\cdot\frac{1}{2\lambda\epsilon}\sum_{t=1}^T\gamma_t.
\end{equation}

We now upper-bound the first term  in Equation~\eqref{eq:deltaeqhat}. Let $r_t := \mu^*|\mathbb{E}[\mu_t-\mu^*]|$. There are two cases. 

Case 1: when $\mu^*\leq\epsilon\rho$. Since $0\leq\mu_t,\mu^*\leq\bar{\mu}$, we have $r_t = \mu^*|\mathbb{E}[\mu_t-\mu^*]|\leq \epsilon\rho\bar{\mu}$, implying $\sum_{t=1}^Tr_t\leq T\epsilon\rho\bar{\mu}$. 

Case 2: when $\mu^*>\epsilon\rho$. From the update rule of the dual variables, we have $\mu_{t+1} = \mu_t +\epsilon(\sum_{j\in[J]}\hat{p}_{j,t}-\rho) - P_t$, where $P_t$ is the projection error. Subtracting $\mu^*$, multiplying by $\mu^*$, taking expectations, then the absolute value, we obtain that 
\begin{equation}
\label{eq:rt}
\begin{aligned}
r_{t+1} & = \mu^*\big|\mathbb{E}[\mu_{t+1}-\mu^*]\big| \overset{}{=}\big|\mu^*(\E[\mu_t - \mu^* + \epsilon(\sum_{j\in[J]}\hat{p}_{j,t}-\rho) - P_t])\big|\\
&\overset{}{=}\big|\mu^*\E[\mu_t - \mu^*] + \epsilon\mu^*(\E[\hat{G}_t(\mu_t)]-\rho) - \mu^*\E[P_t])\big|\\
&\overset{(1)}{=} \big|\mu^*\mathbb{E}[\mu_t-\mu^*] + \epsilon\mu^*(\mathbb{E}[\hat{G}_t(\mu_t) - G(\mu^*)]) - \mu^*\mathbb{E}[P_t]\big|\\
&= \big|\mu^*\mathbb{E}[\mu_t-\mu^*] + \epsilon\mu^*(\mathbb{E}[\hat{G}_t(\mu_t) - G(\mu_t) + G(\mu_t) - G(\mu^*)]) - \mu^*\mathbb{E}[P_t]\big|\\
&\overset{(2)}{\leq} \big|\mu^*\mathbb{E}[\mu_t-\mu^*] + \epsilon\mu^*(\mathbb{E}[G(\mu_t) - G(\mu^*)]) - \mu^*\mathbb{E}[P_t]\big| + \epsilon\bar{\mu}\big|\mathbb{E}[\hat{G}_t(\mu_t) - G(\mu_t)]\big|\\
&\overset{(3)}{=} \bigg|(1+\epsilon G'(\mu^*))\mu^*\mathbb{E}[\mu_t-\mu^*] + \frac{G''(\xi)}{2}\epsilon\mu^*\mathbb{E}[(\mu_t-\mu^*)^2]-\mu^*\mathbb{E}[P_t]\bigg| + \epsilon\bar{\mu}\big|\mathbb{E}[\hat{G}_t(\mu_t) - G(\mu_t)]\big|\\
&\overset{(4)}{\leq}\bigg|(1+\epsilon G'(\mu^*))\mu^*\mathbb{E}[\mu_t-\mu^*] + \frac{G''(\xi)}{2}\epsilon\mu^*\mathbb{E}[(\mu_t-\mu^*)^2]-\mu^*\mathbb{E}[P_t]\bigg|+ \epsilon\bar{\mu}\gamma_t,
\end{aligned}
\end{equation}
where Equation~$(1)$ holds because $\mu^*(G(\mu^*)-\rho) = 0$ while Inequality~$(2)$ follows from triangle inequality and $\mu^*\leq\bar{\mu}$. Equation~($3)$ follows from the Taylor's expansion of $G(\mu_t)$ around $\mu^*$; since $G$ is twice differentiable, for some $\xi$ between $\mu_t$ and $\mu^*$, we have 
\begin{equation*}
    G(\mu_t) = G(\mu^*) + (\mu_t-\mu^*)G'(\mu^*) + \frac{G''(\xi)}{2}(\mu_t-\mu^*)^2.
\end{equation*}
Lastly, Inequality~$(4)$ follows from Inequality \eqref{eq:expenditurebound} where $|\hat{G}_t(\mu) - G(\mu)|\leq \gamma_t$.

We now bound the projection error $P_t = \mu_t +\epsilon(\realpay -\rho) - P_{[0,\bar{\mu}]}\left(\mu_t +\epsilon(\realpay -\rho)\right)$. The projection error satisfies
\begin{equation*}
    P_t\overset{(a)}{\geq} (\mu_t+\epsilon(\realpay-\rho))\mathbf{1}\{\mu_t+\epsilon(\realpay-\rho)<0\} \overset{(b)}{\geq}-\epsilon\rho\mathbf{1}\{\mu_t<\epsilon\rho\},
\end{equation*}
where (a) holds because 1) when $\mu_t+\epsilon(\realpay-\rho)>\bar{\mu}$, $P_t = \mu_t+\epsilon(\realpay-\rho) - \bar{\mu}\geq 0$, 2) when $0\leq\mu_t+\epsilon(\realpay-\rho)\leq\bar{\mu}$, $P_t = \mu_t+\epsilon(\realpay-\rho)\geq 0$, and 3) when $\mu_t+\epsilon(\realpay-\rho)<0$, $P_t = \mu_t+\epsilon(\realpay-\rho)$. Furthermore, (b) holds because $\mu_t\geq 0$ and the expenditure is non-negative. Taking expectation then applying Markov's inequality, we have
\begin{equation*}
    -\mathbb{E}[P_t]\leq\epsilon\rho\mathbb{P}(\mu_t<\epsilon\rho)\leq\epsilon\rho\mathbb{P}\left(|\mu_t-\mu^*|\geq|\mu^*-\epsilon\rho|\right)\leq\frac{\epsilon\rho\alpha_t}{(\mu^*-\epsilon\rho)^2}\,,
\end{equation*}
where $\alpha_t = \mathbb{E}[(\mu_t-\mu^*)^2]$. The second inequality holds because in this case, $\mu^* > \epsilon\rho$, hence when $\mu_t<\epsilon\rho$, we have $\mu_t < \epsilon\rho < \mu^*$ and $\mu^* - \epsilon\rho < \mu^* - \mu_t$, which implies the second inequality.

Moreover, since the dual function is $\lambda$ strongly convex by our assumption, the second derivative of $\Gamma$, $-G'(\mu)$, is at least $\lambda$, so we have $\sup_{\mu\in[0,\bar{\mu}]}G'(\mu)\leq -\lambda$. Let $\bar{G}''$ be the upper-bound $\sup_{\mu\in[0,\bar{\mu}]}G''(\mu)$, which exists since the dual function is thrice differentiable with bounded derivatives. Applying the triangle inequality in Equation~\eqref{eq:rt}, we have
\begin{equation}
\begin{aligned}
\label{eq:prevprevbound}
    r_{t+1}&\leq |(1+\epsilon G'(\mu^*))\mu^*\mathbb{E}[\mu_t-\mu^*]| + \bigg|\frac{G''(\xi)}{2}\epsilon\mu^*\mathbb{E}[(\mu_t-\mu^*)^2]\bigg|+|\mu^*\mathbb{E}[P_t]|+ \epsilon\bar{\mu}\gamma_t\\
    &\leq(1-\epsilon\lambda)r_t + \frac{\bar{G}''}{2}\epsilon\mu^*\alpha_t + \frac{\bar{\mu}\epsilon\rho\alpha_t}{(\mu^*-\epsilon\rho)^2} + \epsilon\bar{\mu}\gamma_t = (1-\epsilon\lambda)r_t + \epsilon\alpha_t\left(\frac{\bar{\mu}\rho}{(\mu^*-\epsilon\rho)^2} + \frac{\bar{\mu}\bar{G}''}{2}\right) + \epsilon\bar{\mu}\gamma_t,
\end{aligned}
\end{equation}
where we substitute $\sup_{\mu\in[0, \bar{\mu}]}G'(\mu)\leq - \lambda$, $\alpha_t = \E[(\mu_t - \mu^*)^2]$, and the upper-bound on the absolute value of projection $|\E[P_t]|$. Let $R = \frac{\bar{\mu}\rho}{(\mu^*-\epsilon\rho)^2} + \frac{\bar{\mu}\bar{G}''}{2}$, applying this recursively, we obtain
\begin{equation*}
r_t\leq\bar{\mu}^2(1-\lambda\epsilon)^{t-1} + \epsilon R\sum_{s=1}^{t-1}(1-\lambda\epsilon)^{t-1-s}\alpha_s + \epsilon\bar{\mu}\sum_{k=1}^t(1-2\lambda\epsilon)^{t-k}\gamma_k,
\end{equation*}
and
\begin{equation}
\label{eq:upperboundrt}
\sum_{t=1}^T r_t\leq\frac{\bar{\mu}^2}{\lambda\epsilon} + \frac{R}{\lambda}\sum_{s=1}^T\alpha_s+\frac{\bar{\mu}}{2\lambda}\sum_{t=1}^T\gamma_t.
\end{equation}

\textbf{Step 7: putting everything together.} We have an upper-bound on the optimal in-hindsight solution from step 1, i.e., we have $OPT(\bm{X},\rho) \leq T\cdot\inf_{\mu\geq 0}{\Gamma}(\mu)$. We can also get a lower-bound on the total expected utility of the AVP algorithm by combining Equation~\eqref{eq:loweboundJU} from step 3 and Equation~\ref{eq:lowerboundutilityroundt} from step 5. We now have the following:
\begin{equation*}
\begin{aligned}
\text{Reg}^{\text{AVP}}(T,\rho) &\leq T\cdot\Gamma(\mu^*) - \sum_{t=1}^T\left({\Gamma}(\mu^*) - \bar{G}'\mu^*|\mathbb{E}[\mu_t-\mu^*]| - \left(\frac{\bar{\mu}\bar{G}''+\bar{G}'}{2}\right)\mathbb{E}[(\mu_t-\mu^*)^2] - \Delta_t\right) \\
&\qquad\qquad+ JU\mathbb{E}[(T-\tau)^+]\\
&\leq \bar{G}'\sum_{t=1}^T r_t + \left(\frac{\bar{\mu}\bar{G}''+\bar{G}'}{2}\right)\sum_{t=1}^T\alpha_t + \sum_{t=1}^T\Delta_t + JU\left(\frac{\bar{\mu}}{\epsilon\rho} + \frac{JU}{\rho}\right),
\end{aligned}
\end{equation*}
where the last line follows from Equation~\eqref{eq:timelowerbound} and substituting $r_t$ and $\alpha_t$ from step 6. There are two cases.

Case 1: when $\mu^*\leq\epsilon\rho$. By substituting the upper-bound on the sum of $r_t$, where we have $\sum_{t=1}^T r_t\leq T\epsilon\rho\bar{\mu}$, and substituting the upper-bound on $\sum_{t=1}^T\alpha_t$ in Equation~\eqref{eq:upperbounddelta} from step 6, we have
\begin{equation*}
\begin{aligned}
RHS &\leq \bar{G}'T\epsilon\rho\bar{\mu} + \left(\frac{\bar{\mu}\bar{G}''+\bar{G}'}{2}\right)\cdot\left(\frac{m^2U^2\epsilon T}{2\lambda} + \frac{\bar{\mu}^2}{2\lambda\epsilon}+\frac{\bar{\mu}}{2\lambda\epsilon}\sum_{t=1}^T\gamma_t\right) \\
&\quad  + \sum_{t=1}^T\Delta_t + JU\left(\frac{\bar{\mu}}{\epsilon\rho} + \frac{JU}{\rho}\right).
\end{aligned}
\end{equation*}

Case 2: when $\mu^*>\epsilon\rho$. We have
\begin{equation*}
\begin{aligned}
RHS &= \bar{G}'\sum_{t=1}^T r_t + \left(\frac{\bar{\mu}\bar{G}''+\bar{G}'}{2}\right)\sum_{t=1}^T\alpha_t + \sum_{t=1}^T\Delta_t + JU\left(\frac{\bar{\mu}}{\epsilon\rho} + \frac{JU}{\rho}\right)\\
&\leq \frac{\bar{G}'\bar{\mu}^2}{\lambda\epsilon} + \frac{\bar{G}'}{\lambda}\left(\frac{\bar{\mu}\rho}{(\mu^*-\epsilon\rho)^2)} + \frac{\bar{\mu}\bar{G}''}{2}\right)\sum_{s=1}^T\alpha_s+ \frac{\bar{G}'\bar{\mu}}{2\lambda}\sum_{t=1}^T\gamma_t+\left(\frac{\bar{\mu}\bar{G}''+\bar{G}'}{2}\right)\sum_{t=1}^T\alpha_t\\
&\quad+ \sum_{t=1}^T\Delta_t+ JU\left(\frac{\bar{\mu}}{\epsilon\rho} + \frac{JU}{\rho}\right)\\
&\leq \frac{\bar{G}'\bar{\mu}^2}{\lambda\epsilon} +\left(\frac{\bar{G}'\bar{\mu}\rho}{\lambda(\mu^*-\epsilon\rho)^2} + \frac{\bar{G}'\bar{\mu}\bar{G}''}{2\lambda} + \frac{\bar{\mu}\bar{G}''+\bar{G}'}{2}\right)\cdot\left(\frac{m^2U^2\epsilon T}{2\lambda} + \frac{\bar{\mu}^2}{2\lambda\epsilon}+\frac{\bar{\mu}}{2\lambda\epsilon}\sum_{t=1}^T\gamma_t\right)\\
&\quad+ \frac{\bar{G}'\bar{\mu}}{2\lambda}\sum_{t=1}^T\gamma_t+ \sum_{t=1}^T\Delta_t+ JU\left(\frac{\bar{\mu}}{\epsilon\rho} + \frac{JU}{\rho}\right),
\end{aligned}
\end{equation*}
where the first inequality follows from Equation~\eqref{eq:upperboundrt} and the second inequality follows from Equation~\eqref{eq:upperbounddelta}.

In both cases, there exist constants $C_1,C_2,C_3,C_4$ such that
\begin{equation*}
\begin{aligned}
\text{Reg}^{\text{AVP}}(T,\rho) &\leq C_1\cdot\frac{1}{\epsilon} + C_2\cdot\epsilon T+C_3\frac{\sum_{t=1}^T\gamma_t}{\epsilon} + C_4\sum_{t=1}^T\Delta_t\\
&=\leq C_1\cdot\frac{1}{\epsilon} + C_2\cdot\epsilon T+C_3\frac{L'\sum_{j=1}^J\delta_{j,t}}{\epsilon} + C_4\sum_{t=1}^T\sum_{j=1}^J (UL+L)\delta_{j,t},
\end{aligned}
\end{equation*}
where the last line follows by substituting the bounds in step 2, Inequality~\eqref{eq:expenditurebound} and Inequality~\eqref{eq:utilitybound}. Assuming that $\sum_{t=1}^T\sum_{j=1}^J\delta_{j,t} = O(\sqrt{T})$, the regret is minimized by choosing $\epsilon = O(T^{-1/4})$, where we get $O(T^{3/4})$ regret.
\end{proof}

\begin{lemma}
\label{lemma:BR-estimate}
Suppose that for each auction $j$ and value $v$, the expected utility function $v\expalloc(b) - \exppay(b)$ is $\kappa$ strongly concave in $b$. For any $t\in[T],j\in[J]$, let $\hat{F}_{j,t}$ be the empirical bid distribution for auction $j$ at time $t$ in Algorithm~\ref{alg:online}, and $\hat{\sigma}_{j,t+1}:\mathbb{R}\rightarrow\mathbb{R}_{\geq0}$ be defined as $\hat{\sigma}_{j,t+1}(v)=v,\arg\max_{b\in\mathbb{R}}\E_{\hat{F}_{j,t}}[v\alloc(\bbf_{j,-(n+1)},b)- \pay(\bbf_{j,-(n+1)},b)]$. Moreover, recall that $\sigma_j(v) = \arg\max_{b\in\mathbb{R}}v\bar{x}_j(b) - \bar{p}_j(b)$, as defined in Equation~\eqref{def:bestresponse}. Then, for each round $t$, there exists $\delta_{j,t}>0$ such that $\sup_{v\in\mathbb{R}}|\sigma_j(v) - \hat{\sigma}_{j,t}(v)|\leq\delta_{j,t}$, where $\sum_{t=1}^T\delta_{j,t} = O(\sqrt{T})$ for each $j$.
\end{lemma}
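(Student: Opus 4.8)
The plan is to decompose the best-response estimation error into three stages and control each separately: (i) the statistical error in estimating the competitors' CDF $F_j$ by the empirical CDF $\hat F_{j,t}$; (ii) the induced error in the expected-utility objective $v\bar{x}_j(b)-\bar{p}_j(b)$; and (iii) the resulting error in its maximizer, i.e. in the best response $\sigma_j$. For stage (i) I would invoke the Dvoretzky--Kiefer--Wolfowitz (DKW) inequality. The empirical CDF used to form $\hat\sigma_{j,t}$ is built from $m=\Theta(nt)$ i.i.d.\ bids drawn from $F_j$ (the $n$ competitors are i.i.d.\ and $F_j$ is stationary), so $\mathbb{P}(\sup_b|\hat F_{j,t}(b)-F_j(b)|>\eta)\le 2e^{-2m\eta^2}$, which yields $\E[\sup_b|\hat F_{j,t}(b)-F_j(b)|]=O(1/\sqrt{t})$ together with an exponentially small tail.

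For stage (ii), fix $v\in\mathcal V$ and write $f_v(b)=v\bar{x}_j(b)-\bar{p}_j(b)$ and its empirical counterpart $\hat f_{v,t}(b)=v\,\E_{\hat F_{j,t}}[x_{j,n+1}(\cdot,b)]-\E_{\hat F_{j,t}}[p_{j,n+1}(\cdot,b)]$. Since the expected allocation and payment are expectations of bounded functions of the other bids---for position auctions such as GFP they are polynomials in $F(b)$, as derived in Section~\ref{sec:sufficient}---they are Lipschitz in the CDF value. Hence, uniformly over $v\in\mathcal V$, $\|\hat f_{v,t}-f_v\|_\infty\le C(1+U)\sup_b|\hat F_{j,t}(b)-F_j(b)|$ for a constant $C$ depending only on $n$, the discount factors, and $U$. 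Combining with stage (i) gives $\E\|\hat f_{v,t}-f_v\|_\infty=O(1/\sqrt t)$, uniformly in $v$.

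For stage (iii)---the crux---I would pass from objective closeness to maximizer closeness using $\kappa$-strong concavity of the true objective. Writing $b^*=\sigma_j(v)$ and $\hat b=\hat\sigma_{j,t}(v)$, the naive estimate uses only function values: optimality of $b^*$ for $f_v$ and of $\hat b$ for $\hat f_{v,t}$ gives $\tfrac\kappa2(\hat b-b^*)^2\le f_v(b^*)-f_v(\hat b)\le 2\|\hat f_{v,t}-f_v\|_\infty$, i.e.\ $\delta_{j,t}=O(\|\hat F_{j,t}-F_j\|_\infty^{1/2})=O(t^{-1/4})$, which only sums to $O(T^{3/4})$ and falls short of the target. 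To obtain the claimed $\sum_t\delta_{j,t}=O(\sqrt T)$ I would instead use the first-order condition: since $f_v'(b^*)=0$, $\kappa$-strong concavity gives $\kappa|\hat b-b^*|\le|f_v'(\hat b)-f_v'(b^*)|=|f_v'(\hat b)|$, so it suffices to bound $|f_v'(\hat b)|=|f_v'(\hat b)-\hat f_{v,t}'(\hat b)|\le\|f_v'-\hat f_{v,t}'\|_\infty$ by the error in the first-order conditions. This would yield the sharp linear bound $\delta_{j,t}=O(\|\hat F_{j,t}-F_j\|_\infty)=O(t^{-1/2})$, and then $\sum_{t\le T}t^{-1/2}=O(\sqrt T)$ closes the claim.

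The main obstacle is precisely this derivative control: the raw empirical CDF is a step function, so $\hat f_{v,t}$ is only piecewise smooth, $\hat f_{v,t}'(\hat b)$ need not even vanish in the classical sense, and $\|f_v'-\hat f_{v,t}'\|_\infty$ is \emph{not} controlled by $\|F_j-\hat F_{j,t}\|_\infty$ (the empirical density is a sum of point masses). Overcoming this is where the real work lies---either by replacing the empirical CDF with a smoothed (e.g.\ kernel) estimator whose derivative converges uniformly at rate $O(\|F_j-\hat F_{j,t}\|_\infty)$, or by a more delicate variational argument exploiting the monotone position-auction structure to bound the displacement of the maximizer directly. Once the linear bound holds uniformly over the bounded set $v\in\mathcal V\subseteq(0,U)$ of realized values (using that $\kappa$ and the Lipschitz constants are uniform in $v$), the remaining bookkeeping is routine: the exponentially small DKW failure events contribute at most $O(U)$ each with summable probability, so they do not affect the $O(\sqrt T)$ rate in expectation.
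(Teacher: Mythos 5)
Your proposal does not close the lemma: stage (iii), which you yourself flag as ``where the real work lies,'' is left as an acknowledged obstacle rather than a proof. You correctly note that the empirical CDF is a step function, so the first-order-condition route would require controlling $\|f_v'-\hat f_{v,t}'\|_\infty$, which the Kolmogorov distance $\sup_b|\hat F_{j,t}(b)-F_j(b)|$ does not give, and you only gesture at remedies (kernel smoothing, a finer variational argument) without carrying either out. As written, the only bound you actually establish is your ``naive'' function-value one, $\delta_{j,t}=O(t^{-1/4})$, which sums to $O(T^{3/4})$ and falls short of the claimed $\sum_t\delta_{j,t}=O(\sqrt T)$.

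For comparison, the paper never leaves the function-value argument. It applies DKW to the $nt$ bids collected through round $t$, asserts a uniform objective-gap bound $|\hat h(v,b)-h(v,b)|\le K'/(nt)$ where $\hat h$ and $h$ are the empirical and true expected utilities, runs the standard sandwich $h(v,\sigma(v))\ge h(v,\hat\sigma(v))\ge\hat h(v,\hat\sigma(v))-K'/(nt)\ge\hat h(v,\sigma(v))-K'/(nt)\ge h(v,\sigma(v))-2K'/(nt)$, and then uses $\kappa$-strong concavity to get $|\hat\sigma(v)-\sigma(v)|\le\sqrt{(2/\kappa)\cdot 2K'/(nt)}=O(t^{-1/2})$, which sums to $O(\sqrt T)$. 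The entire difference from your stage (ii) is the claimed magnitude $O(1/(nt))$ of the objective gap rather than your $O(1/\sqrt{nt})$: a gap of order $1/(nt)$ survives the square root from strong concavity at exactly the $t^{-1/2}$ rate. But your accounting is the textbook-correct one: DKW with $m=nt$ samples gives Kolmogorov error of order $\sqrt{\log(nt)/(nt)}$ with probability $1-O(1/(nt))$, and since the expected allocation and payment are (only) Lipschitz in the Kolmogorov distance, the objective gap is $\tilde O(t^{-1/2})$; the paper's step from the DKW \emph{tail probability} $K/(nt)$ to an \emph{error magnitude} $K'/(nt)$ conflates these two quantities. So while your attempt is incomplete, your ``falls short'' diagnosis in fact applies to the paper's own proof as written — under the correct DKW magnitude its argument also yields only $\sum_t\delta_{j,t}=\tilde O(T^{3/4})$ — and the derivative-level control you sketch (e.g.\ a smoothed CDF estimator with uniformly convergent derivative) is a legitimate direction for an actual repair.
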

\begin{proof}
Throughout the proof, we fix the auction $j$. From DKW inequality, at round $t$, because $\hat{F}_{j,t}$ has collected $n\cdot t$ bids from the first $t$ rounds, based on the Dvoretzky-Kiefer-Wolfowitz (DKW) inequality \cite{dvoretzky1956asymptotic, massart1990tight}, we have
\begin{align*}
\mathbb{P}\left(\sup_{x\in\mathbb{R}}|\hat{F}_{j,t}(x) - F_j(x)|>{\sqrt{\frac{\log nt}{2n}}}\right)
&\leq K\cdot \frac{1}{nt}
\end{align*}
for some constant $K>0$, which implies that there exists some constant $K'>0$ such that for any $v,b\in\mathbb{R}_{\geq 0}$,
\begin{align*}
\Bigg|\E_{\hat{F}_{j,t}}[v\alloc(\bbf_{j,-(n+1)},b)- \pay(\bbf_{j,-(n+1)},b)] - \E_{{F}_{j}}[v\alloc(\bbf_{j,-(n+1)},b)- \pay(\bbf_{j,-(n+1)},b)]\Bigg|\leq K'\cdot\frac{1}{nt},
\end{align*}
since the values are upper-bounded by a constant $U$. For simplicity, let $$\hat{h}(v,b) = \E_{\hat{F}_{j,t}}[v\alloc(\bbf_{j,-(n+1)},b)- \pay(\bbf_{j,-(n+1)},b)]$$ and $$h(v,b) = \E_{{F}_{j}}[v\alloc(\bbf_{j,-(n+1)},b)- \pay(\bbf_{j,-(n+1)},b)],$$ and let $\hat{\sigma}(v)$ and $\sigma(v)$ be the maximizers of $\hat{h}(v,b)$ and $h(v,b)$ over $b\in\mathbb{R}$ respectively. Since $h$ is concave,
we have 
$$h(v, \sigma(v))\geq h(v, \hat{\sigma}(v))\geq\hat{h}(v, \hat{\sigma}(v))-\frac{K'}{nt}\geq \hat{h}(v,  \sigma(v)) - \frac{K'}{nt}\geq h(v, \sigma(v)) - 2\frac{K'}{nt},$$
which implies
\begin{equation*}
    h(v,\sigma(v)) - h(v, \hat{\sigma}(v)) \leq 2\frac{K'}{nt}.
\end{equation*}

By strong concavity of $h$ we have
\begin{align*}
    &h(v, \hat{\sigma}(v)) \leq h(v, \sigma(v)) - h'(v, \sigma(v))(\hat{\sigma}(v)-\sigma(v)) + \frac{\kappa}{2}|| \hat{\sigma}(v)-\sigma(v)||^2\\
    &\Leftrightarrow\quad\quad | \hat{\sigma}(v)-\sigma(v)|\leq\sqrt{\frac{2}{\kappa}(h(\sigma(v))-h(\hat{\sigma}(v)))}\leq \frac{2K'}{\sqrt{n\kappa}}\cdot t^{-1/2}.
\end{align*}
Thus, we can set $\delta_{j,t} = \frac{2K'}{\sqrt{n\kappa}}\cdot t^{-1/2}$, where we have $\sum_{t=1}^T \delta_{j,t}\leq \int_{0}^{T}\frac{2K'}{\sqrt{n\kappa}}\cdot t^{-1/2} dt \leq \frac{2K'}{\sqrt{n\kappa}}\cdot \sqrt{T} = O(\sqrt{T})$.
\end{proof}

\end{document}